\newcolumntype{L}[1]{>{\raggedright\arraybackslash}p{#1}}
\newcolumntype{C}[1]{>{\centering\arraybackslash}p{#1}}
\newcolumntype{R}[1]{>{\raggedleft\arraybackslash}p{#1}}
\theoremstyle{plain} % plain = italic, definition = roman
\newtheorem{theorem}{Theorem}
\newtheorem{assumption}{Assumption}
\def\defn{\,\coloneqq\,}
\def\argmin{\mathop{\mathsf{arg\,min}}} % Argument of a minimization
\def\lim{\mathop{\mathsf{lim}}} % limit
\def\min{\mathop{\mathsf{min}}} 
\def\max{\mathop{\mathsf{max}}}
\def\prox{\mathsf{prox}}
\def\log{\mathsf{log}}
\def\zer{\mathsf{zer}}
\def\ebm{{\bm{e}}}
\def\xbm{{\bm{x}}}
\def\ybm{{\bm{y}}}
\def\zbm{{\bm{z}}}
\def\thetabm{{\bm{\theta}}}
\def\zerobm{\bm{0}}
\def\Abm{{\bm{A}}}
\def\Dbm{{\bm{D}}}
\def\Fbm{{\bm{F}}}
\def\phibm{{\bm{\phi}}}
\def\xbmast{{\bm{x}^\ast}}
\def\xbmtilde{{\widetilde{\bm{x}}}}
\def\xbfhat{{\widehat{\mathbf{x}}}}
\def\xbfhat{{\widehat{\bm{x}}}}
\def\nablahat{{\widehat{\nabla}}}
\def\phihat{{\widehat{\bm\phi}}}
\def\phibmhat{{\widehat{\phibm}}}
\def\Tsf{{\mathsf{T}}}
\def\Hsf{{\mathsf{H}}}
\def\Hsf{{\mathsf{H}}}
\def\C{\mathbb{C}}
\def\R{\mathbb{R}}
\def\E{\mathbb{E}}
\def\Tcal{{\mathcal{T}}}
\def\Lcal{{\mathcal{L}}}
\definecolor{pink}{HTML}{000000}
\definecolor{lightgreen}{RGB}{229,251,229}
\definecolor{red}{rgb}{0,0,0}
\title{SGD-Net:~Efficient Model-Based Deep Learning\\with Theoretical Guarantees}
\author{Jiaming~Liu%
\thanks{Department of Electrical \& Systems Engineering, Washington University in St.~Louis, St.~Louis, MO 63130.}
\hspace{0.05em},
Yu~Sun%
\thanks{Department of Computer Science \& Engineering, Washington University in St.~Louis, St.~Louis, MO 63130.}
\hspace{0.05em}, 
Weijie~Gan$^{\dagger}$,
\hspace{0.05em}
Xiaojian~Xu$^{\dagger}$,\\
Brendt~Wohlberg%
\thanks{Theoretical Division, Los Alamos National Laboratory, Los Alamos, NM 87545 USA.}
, and Ulugbek~S.~Kamilov$^{\ast, \dagger}$}
\begin{document}
%\ninept
\date{}
\maketitle

% As a general rule, do not put math, special symbols or citations
% in the abstract or keywords.
\begin{abstract}
Deep unfolding networks have recently gained popularity in the context of solving imaging inverse problems. However, the computational and memory complexity of data-consistency layers within traditional deep unfolding networks scales with the number of measurements, limiting their applicability to large-scale imaging inverse problems. We propose \emph{SGD-Net} as a new methodology for improving the efficiency of deep unfolding through stochastic approximations of the data-consistency layers. Our theoretical analysis shows that SGD-Net can be trained to approximate batch deep unfolding networks to an arbitrary precision. Our numerical results on intensity diffraction tomography and sparse-view computed tomography show that SGD-Net can match the performance of the batch network at a fraction of training and testing complexity.
\end{abstract}

\section{Introduction}
\label{Sec:Intro}
The recovery of an unknown image from a set of noisy measurements is a central problem in computational imaging. The recovery is traditionally formulated as an \emph{inverse problem} that combines a physical-model characterizing the imaging system with a regularizer imposing a prior knowledge on the unknown image. Over the past years, many regularizers have been proposed as imaging priors, including those based on transform-domain sparsity, low-rank penalty, and dictionary learning~\cite{Rudin.etal1992, Figueiredo.Nowak2001, Elad.Aharon2006, Danielyan.etal2012, Hu.etal2012, Lefkimmiatis.etal2013}.

There has been considerable recent interest in using \emph{deep learning (DL)} in the context of imaging inverse problems (see recent reviews~\cite{McCann.etal2017, Lucas.etal2018, Ongie.etal2020}). Instead of explicitly defining a regularizer, the traditional DL approach is based on training a \emph{convolutional neural network (CNN)} architecture, such as U-Net~\cite{Ronneberger.etal2015}, to invert the measurement operator by exploiting the natural redundancies in the imaging data~\cite{DJin.etal2017, Kang.etal2017, Chen.etal2017, Sun.etal2018, han.etal2018}. \emph{Plug-and-play priors (PnP)}~\cite{Venkatakrishnan.etal2013} and \emph{regularization by denoising (RED)}~\cite{Romano.etal2017} are two well-known alternative approaches to the traditional DL that enable the integration of pre-trained CNN denoisers, such as DnCNN~\cite{Zhang.etal2017}, as image priors within iterative algorithms.
%It has been shown that PnP/RED equipped with advanced CNN denoisers lead to excellent performance due to their ability to exploit the knowledge of the measurement model~\cite{Zhang.etal2017a, Metzler.etal2018, Sun.etal2019c}.
When equipped with advanced CNN denoisers, PnP/RED provides excellent performance by exploiting both the implicit prior, characterized by the denoiser, and the measurement model~\cite{Chan.etal2016, Sreehari.etal2016, Kamilov.etal2017, Buzzard.etal2017, Sun.etal2019a, Reehorst.Schniter2019, Ryu.etal2019, Mataev.etal2019, Wu.etal2020, Liu.etal2020}. \emph{Deep unfolding} is a related approach that interprets the iterations of an image recovery algorithm as layers of a CNN and trains it end-to-end in a supervised fashion (see \emph{``Unrolling''} in~\cite{McCann.etal2017} or \emph{``Neural networks and analytical methods''} in~\cite{Lucas.etal2018}). Unlike in PnP/RED, the CNN in deep unfolding is trained jointly with the measurement model, leading to an image prior optimized for a given inverse problem~\cite{zhang2018ista, Yang.etal2016, Hauptmann.etal2018, Adler.etal2018, Aggarwal.etal2019, Hosseini.etal2019, Chun.etal2020, Yaman.etal2020, Aggarwal.etal2020, Kellman.etal2020}. Despite the recent popularity of deep unfolding, the training of such networks can be a significant practical challenge in applications that require processing of a large-number of measurements. Specifically, the data-consistency layers of these recursive neural networks are based on \emph{batch} processing, which means that they process the \emph{entire set of measurements} at each layer. While this type of batch data processing is known to be suboptimal in traditional large-scale optimization~\cite{Bottou.Bousquet2007, Bertsekas2011, Kim.etal2013, Bottou.etal2018}, the issue has never been addressed in the context of designing deep unfolding networks.

We address this issue by proposing \emph{SGD-Net} as the first deep unfolding methodology to adopt \emph{stochastic processing} of measurements within data-consistency layers. This improves the efficiency of SGD-Net compared to its batch counterparts on large datasets during both training and testing. We implement SGD-Net by unfolding the gradient-based RED algorithm and introducing stochastic approximations to its data-consistency layers. The CNN within SGD-Net is trained in an end-to-end fashion to remove artifacts due to the imaging system and stochastic processing of the measurements. We present a theoretical analysis of SGD-Net that establishes that the network can be trained to approximate the corresponding batch deep unfolding network to desired accuracy. We also demonstrate the practical relevance of SGD-Net by reconstructing images in \emph{intensity diffraction tomography (IDT)}~\cite{Ling.etal18, Wu.etal2020} and sparse-view~\emph{computed tomography (CT)}~\cite{Kak.Slaney1988}. Our results corroborate the effectiveness of SGD-Net in achieving comparable imaging quality to batch deep unfolding networks at a fraction of computational complexity. SGD-Net thus addresses an important gap in the current literature on deep unfolding by providing an efficient framework applicable to a wide variety of imaging problems.

\section{BACKGROUND}
\label{Sec:Background}
\subsection{Inverse problems in computational imaging}
\label{subSec:InvImaging}
Computational imaging problems can usually be posed as the reconstruction of an  unknown image $\xbm\in\C^n$ from a set of corrupted measurements $\ybm\in \C^m$. The reconstruction is often formulated as an inverse problem
\begin{equation}
\label{Eq:RegularizedOptimization}
\xbmast = \argmin_{\xbm} f(\xbm) \quad\text{with}\quad f(\xbm) = g(\xbm) + h(\xbm),
\end{equation}
where $g$ is the data-fidelity term that quantifies consistency with $\ybm$  and $h$ is the regularizer. For example, two widely used functions in the context of imaging problems are the least squares and total variation (TV)~\cite{Rudin.etal1992}
\begin{equation}
\label{Eq:ForwardModel}
g(\xbm) = \frac{1}{2}\|\ybm - \Abm\xbm\|_2^2 \quad\text{and}\quad r(\xbm) = \tau \|\Dbm\xbm\|_1,
\end{equation}
where $\Abm$ is the measurement operator and $\Dbm$ is the discrete gradient operator. The data-fidelity term in~\eqref{Eq:ForwardModel} assumes a linear measurement model $\ybm=\Abm\xbm + \ebm$, where the measurement operator $\Abm \in \C^{m \times n}$ characterizes the response of the imaging system and $\ebm \in \C^m$ is the noise, which
is often assumed to be independent and identically distributed (i.i.d.) Gaussian.

Many popular regularizers, such as the ones based on the $\ell_1$-norm, are nondifferentiable. Proximal algorithms~\cite{Parikh.Boyd2014}, such as the \emph{proximal gradient method (PGM)}~\cite{Eckstein.Bertsekas1992, Figueiredo.Nowak2003, Bect.etal2004, Daubechies.etal2004, Beck.Teboulle2009b} and the \emph{alternating direction method of multipliers (ADMM)}~\cite{Afonso.etal2010, Ng.etal2010, Boyd.etal2011}, enable efficient minimization of nonsmooth functions without differentiating them by using the \emph{proximal operator}
\begin{equation}
\label{Eq:ProximalOperator}
\prox_{\mu h}(\zbm) \defn \argmin_{\xbm}\left\{\frac{1}{2}\|\xbm-\zbm\|_2^2 + \mu h(\xbm)\right\},
\end{equation}
where $\mu > 0$ is a parameter. Note that the proximal operator can be interpreted as the regularized image denoiser for AWGN with noise of variance $\mu$.
%The wide usage of nonsmooth regularizers, such as TV, has popularized the class of proximal methods, such as the proximal gradient method (PGM)~\cite{Kamilov.etal2017}, for addressing~\eqref{Eq:RegularizedOptimization} without differentiating $h$.
\subsection{Image reconstruction using deep learning}
\label{subSec:DeepImaging}
Recently, deep learning has gained popularity due to its effectiveness for solving imaging inverse problems.  A widely used approach first brings the measurements to the image domain and then trains a deep network to map the corresponding low-quality images $\{\xbmtilde_j\}$ to their clean target versions $\{\xbm_j\}$ by solving an optimization problem~\cite{DJin.etal2017, Sun.etal2018, han.etal2018}
\begin{equation}
\label{Eq:cnn_loss}
\argmin_{\thetabm}\frac{1}{M}\sum_{j=1}^{M}\Lcal(\Tcal_{\thetabm}(\xbmtilde_{j}) - \xbm_{j}),
\end{equation}
where $\Tcal_{\thetabm}$ represents the CNN parametrized by $\thetabm$ trained under the loss function $\Lcal$. Popular loss functions include the $\ell_2$-norm and $\ell_1$-norm~\cite{Zhao.etal2017}. For example, in the context of sparse-view CT, prior methods have trained $\Tcal_\thetabm$ for mapping a filtered backprojected image $\xbmtilde$ to a reconstruction $\xbm$ from a fully-sampled groundtruth data~\cite{McCann.etal2017, han.etal2018}.

The idea of end-to-end inversion can be refined by including the measurement operator into the CNN architecture. Inspired by LISTA~\cite{Gregor.LeCun2010}, the corresponding \emph{unfolding algorithms} interpret iterations of a regularized inversion as layers of a CNN and train it end-to-end in a supervised fashion~\cite{Schmidt.Roth2014, Chen.etal2015, Kamilov.Mansour2016, Bostan.etal2018}. In the context of compressive sensing, ADMM-Net~\cite{Yang.etal2016} and ISTA-Net$^{+}$~\cite{zhang2018ista} have considered jointly training the image transforms and shrinkage functions within an unfolded algorithm. A related class of methods~\cite{Schlemper.etal2018, Aggarwal.etal2019, Hosseini.etal2019}, have included a full CNN as a trainable regularizer within an unfolded algorithm. Such unfolding algorithms have been shown to be effective in a number of problems~\cite{Biswas.etal2019, Hosseini.etal2019} and are closely related to the PnP/RED methods (discussed in Section~\ref{subSec:PnPRED}) that also combine the measurement operator and the imaging prior. Their main differences is that the former optimize the parameters in an end-to-end manner, and generally produce higher-quality results with fewer iterations.  However, model-based deep unfolding architectures require the storage of all the measurements, parameters of the measurement operator, and intermediary activation maps at every iteration of the unfolded algorithm. This limits their ability to solve inverse problems where one needs to process high-dimensional data (for example, see relevant discussion in the section \emph{``Memory Requirements''} in~\cite{Schlemper.etal2018}). It is worth mentioning that a recent work~\cite{Kellman.etal2020} has proposed a memory-efficient learning strategy for model-based deep networks by using reverse recalculations. The reverse recalculation strategy is fundamentally different from our usage of stochastic approximations within data-consistency layers. In fact, both approaches are fully complimentary and can be used together for efficient training of model-based deep architectures.

\subsection{Using denoisers as image priors}
\label{subSec:PnPRED}

Since the proximal operator~\eqref{Eq:ProximalOperator} is mathematically equivalent to regularized image denoising, there has been considerable interest in developing denoiser-based iterative algorithms such as PnP~\cite{Venkatakrishnan.etal2013}  and RED~\cite{Romano.etal2017}. The key idea in PnP is to replace the proximal operator with an advanced image denoiser $H_{\sigma}$, where $\sigma > 0$ controls the strength of denoising. This simple replacement enables PnP to regularize the problem by using advanced denoisers, such as BM3D~\cite{Dabov.etal2007} or DnCNN~\cite{Zhang.etal2017}, that do not correspond to any explicit $h$. Recent studies have confirmed the effectiveness of PnP in a range of imaging applications~\cite{Sreehari.etal2016, Brifman.etal2016, Ahmad.etal2019}.

The RED framework is an alternative scheme where the denoiser can sometimes lead to an explicit regularization function~\cite{Romano.etal2017}. In the most general setting, RED algorithms seek a fixed point $\xbmast$ that satisfies
\begin{equation}
\label{Eq:Gx}
G(\xbmast) = \nabla g(\xbmast) + \tau(\xbmast - H_\sigma(\xbmast)) = 0,
\end{equation}
where $\tau > 0$ is the regularization parameter. Equivalently, $\xbmast$ satisfies
\begin{equation}
\label{Eq:Gxset}
\xbmast \in \zer(G) \defn \{\xbm \in \C^n \,|\, G(\xbm) = \zerobm\}.
\end{equation}
When the denoiser is locally homogeneous and has a symmetric Jacobian~\cite{Romano.etal2017, Reehorst.Schniter2019}, the term $\tau(\xbm - H_{\sigma}(\xbm))$ corresponds to the gradient of the RED regularizer $h(\xbm) =  (\tau/2) \xbm^\Tsf(\xbm - H_\sigma(\xbm))$, which enables a simple interpretation of RED as an instance of ~\eqref{Eq:RegularizedOptimization}. The excellent performance of RED together with learned CNN denoisers has been reported in super-resolution, phase retrieval, and compressed sensing~\cite{Metzler.etal2018, Sun.etal2019c}. Additionally, prior work has developed a scalable \emph{online} variant of RED, known as SIMBA, that is well suited for tomographic applications with a large number of projections~\cite{Wu.etal2020}. However, unlike deep unfolding, the CNN in RED is not jointly trained with the measurement model, limiting its ability to capture the non-iid nature of the artifacts within iterations~\cite{Aggarwal.etal2019}. SGD-Net, introduced in the next section, is a natural extension of SIMBA~\cite{Wu.etal2020} towards CNNs trained for artifact-removal, which leads to a scalable and end-to-end trainable deep network.

%%%%%%%%%%%%%%%%%%%%%%%%%%%%%%%%%%%%%%%%%%%%%
%% Method
%%%%%%%%%%%%%%%%%%%%%%%%%%%%%%%%%%%%%%%%%%%%%

%%%%%%%%%%%%%%%%%%
\begin{figure*}[t]
\begin{center}
\includegraphics[width=16cm]{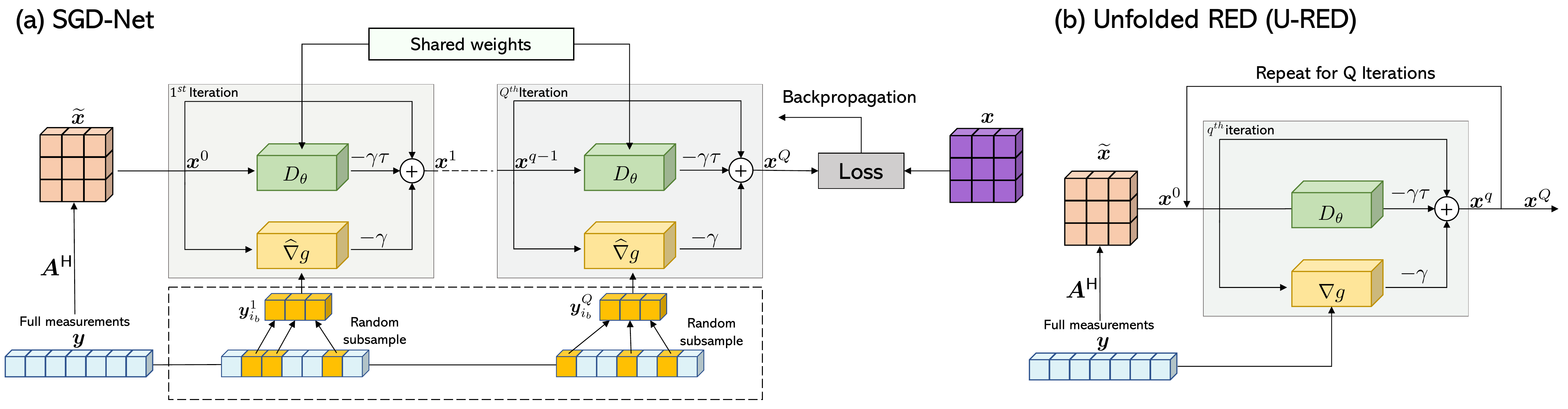}
\end{center}
\caption{Architecture of (a) \emph{SGD-Net} with $Q$ steps and (b) its batch variant that we label as \emph{Unfolded RED (U-RED)}. Data-consistency layers within SGD-Net rely on minibatch gradients with $B \ll I$ measurements, while those in U-RED rely on the full batch gradient using $I$ measurements. The learned operator blocks within SGD-Net and U-RED share their weights across steps and are trained in an end-to-end fashion by accounting for the data-consistency layers.}
\label{Fig:SCREDNET}
%\vspace{-.0em}
\end{figure*}
%%%%%%%%%%%%%%%%%%

\section{Proposed Method}
\label{Sec:Proposed}
We now introduce SGD-Net that adopts stochastic approcessing of measurements within data-consistency layers. As corroborated by our results in Section~\ref{Sec:Experiments}, SGD-Net is ideal for data-intensive applications where the object features are difficult to characterize by a pre-trained CNN denoiser.

\subsection{Stochastic data-consistency layers}

Consider a data-fidelity  $g$ that consists of a set of $I \geq 1$ component functions $\{g_i(\xbm)\}$, where each $g_i$ depends on some subset $\ybm_i$ of the measurements in $\ybm$. For example, in tomographic imaging each $\ybm_i$ corresponds to a single projection of an object along a specific angle~\cite{Kak.Slaney1988}.  When $g$ corresponds to the least-squares penalty, the data-consistency layer within the deep unfolding network becomes
\begin{align}
\label{Eq:sgd_grad}
\nabla g(\xbm) = \frac{1}{I}\sum_{i=1}^I \nabla g_i(\xbm) = \frac{1}{I}\sum_{i=1}^I \Abm^{\Hsf}_i(\Abm_i\xbm - \ybm_i),
\end{align}
where $(\cdot)^{\Hsf}$ denotes the conjugate transpose operation. Note that the complexity of $\nabla g$ scales \emph{linearly} with the total number of components $I$. This means that when $I \rightarrow \infty$, the memory requirements or computation time of the traditional batch deep unfolding becomes impractical.  The central idea of SGD-Net, summarized in Fig~\ref{Fig:SCREDNET}, is to approximate the gradient at every step within the deep unfolding network with an average of $1 \leq B\ll I$ component gradients, which makes SGD-Net independent of the total number of components $I$. The corresponding \emph{minibatch} gradient is computed as
\begin{align}
\label{Eq:minibc_grad1}
\nablahat g(\xbm) =\frac{1}{B}\sum_{b=1}^B \nabla g_{i_b}(\xbm)  = \frac{1}{B}\sum_{b=1}^{B}\Abm^{\Hsf}_{i_b}(\Abm_{i_b}\xbm - \ybm_{i_b}),%\frac{1}{B}\sum_{b=1}^B \nabla d_{i_b}(x)
\end{align}
where $i_1,\dots,i_B$ are independent random indices that are selected uniformly from the set $\{1,\dots,I\}$. The minibatch size parameter $B \geq 1$ controls the number of gradient components at each step of SGD-Net. Note that~\eqref{Eq:minibc_grad1} directly implies that $\E[\nablahat g(\xbm)] = \nabla g(\xbm)$, where the expectation is taken over the random indices $i_1, \dots, i_B$.

We will use the vector $\phibm$ to denote the physical parameters within the data-consistency layers of the unfolded network. In the context of the batch network in Fig.~\ref{Fig:SCREDNET}(b), which uses all the measurements at every step, the physical parameters correspond to the gradients $\{\nabla g_i(\xbm)\}$ at every step of the unfolded network. SGD-Net seeks to minimize the complexity of the unfolded network by replacing $\phibm$ with its minibatch approximation $\phihat$, obtained by applying~\eqref{Eq:minibc_grad1} to every step.

\subsection{Stochastic deep unfolding network}

Given an initial solution $\xbmtilde = \Abm^\Hsf\ybm$, SGD-Net iteratively refines it by infusing information from both the  minibatch gradient of the data-fidelity term $\nablahat g$ and the learned operator $D_{\thetabm}$ defined as
\begin{equation}
\label{Eq:artifact estimator}
D_{\thetabm}(\xbm) = (I - R_{\thetabm})(\xbm) = \xbm - R_\thetabm(\xbm),
\end{equation}
where $R_\thetabm$ is the artifact-removal CNN. Unlike in SIMBA~\cite{Wu.etal2020}, the prior in SGD-Net is optimized end-to-end using the training data to maximally reduce the artifacts. We fix the total number of SGD-Net steps to $Q \geq 1$, with each step given by
\begin{align}
\label{Eq:RED-Netupdate}
&\xbm^{q+1} =  \xbm^{q} - \gamma (\nablahat g(\xbm^{q}) + \tau D_\thetabm(\xbm^{q})),\quad\quad
\end{align}
where $\gamma>0$ is a step-size parameter.
Fig.~\ref{Fig:SCREDNET}(a) illustrates the algorithmic details of SGD-Net, which can, in principle, be implemented with or without weight-sharing across the $Q$ steps. In our implementation, we opted to share the weights of $R_{\thetabm}$ accross different steps to make it more suitable for large-scale imaging applications. While there are no trainable parameters within the stochastic data-consistency layers, one can still use backpropagation to compute the gradient of SGD-Net with respect to the trainable parameters $\thetabm$. Since SGD-Net randomly processes a subset of measurements at each step, the prediction of SGD-Net is in fact randomized. Our theoretical analysis in Section~\ref{Sec:Theory} precisely characterizes the training of SGD-Net relative to that of the batch network that uses all the measurement in every step.

%We set $\tau$ as a trainable parameter while the step-size $\lambda\in(0,\frac{1}{L}]$ and .
Inspired by~\cite{zhang.etal2020a}, our CNN regularizer is based on the widely used U-Net architecture~\cite{Ronneberger.etal2015}. The corresponding CNN consists of four scales, each with a skip connection between downsampling and upsampling. These connection increase the effective receptive field of the network as the input goes deeper in the network~\cite{DJin.etal2017}. The number of channels in each layer are 32, 64, 128, and 256. We make two additional modifications to the U-Net. First, the activation function in our setting is PReLU (parametric ReLU, $f(x)=\max(0,x) + a*\min(0,x)$, where $a$ is a trainable parameter). Second, since we adopt small minibatch training, we use group normalization (GN)~\cite{Wu.etal2018} as an alternative to batch normalization (BN). The computation in GN is independent of the minibatch dimension, which makes its accuracy stable for a wide range of minibatch sizes.

\subsection{End-to-end training of SGD-Net}

%%%%%%%%%%%%%%%%%%%%%%%%%%%%%%%%%%%%%%%%%%%%%%%%%%
%\begin{figure}
%\begin{minipage}[t]{.485\textwidth}
%\begin{algorithm}[H]
%\caption{Training $\mathsf{SGD\text{-}Net}$}\label{alg:rareagm}
%\begin{algorithmic}[1]
%\State \textbf{Input: } initial values of trainable weights $\thetabm^0$, number of iterations $K$, and step-size parameters $\{\eta_{k}\}$
%\For{$k = 0, \dots, K-1$}
%\State Randomly pick $j_k\in\{1,\dots,M\}$
%\State Generate random indices for every step $q$ of SGD-Net
%\State Update weights $\thetabm^{k+1} = \thetabm^{k} - \eta_{k}\;\nablahat F(\thetabm^{k};\;\phihat^k)$\\
%\quad\quad\quad\text{where} $\nablahat F(\thetabm^{k};\;\phihat^k)  = \nabla F_{j_k}(\thetabm^{k};\;\phihat^k)$
%\EndFor\label{euclidendwhile}
%\end{algorithmic}
%\end{algorithm}
%\end{minipage}
%\end{figure}
%%%%%%%%%%%%%%%%%%%%%%%%%%%%%%%%%%%%%%%%%%%%%%%%%%
%%%%%%%%%%%%%%%%%%
\begin{figure*}[t]
\begin{center}
\includegraphics[width=16cm]{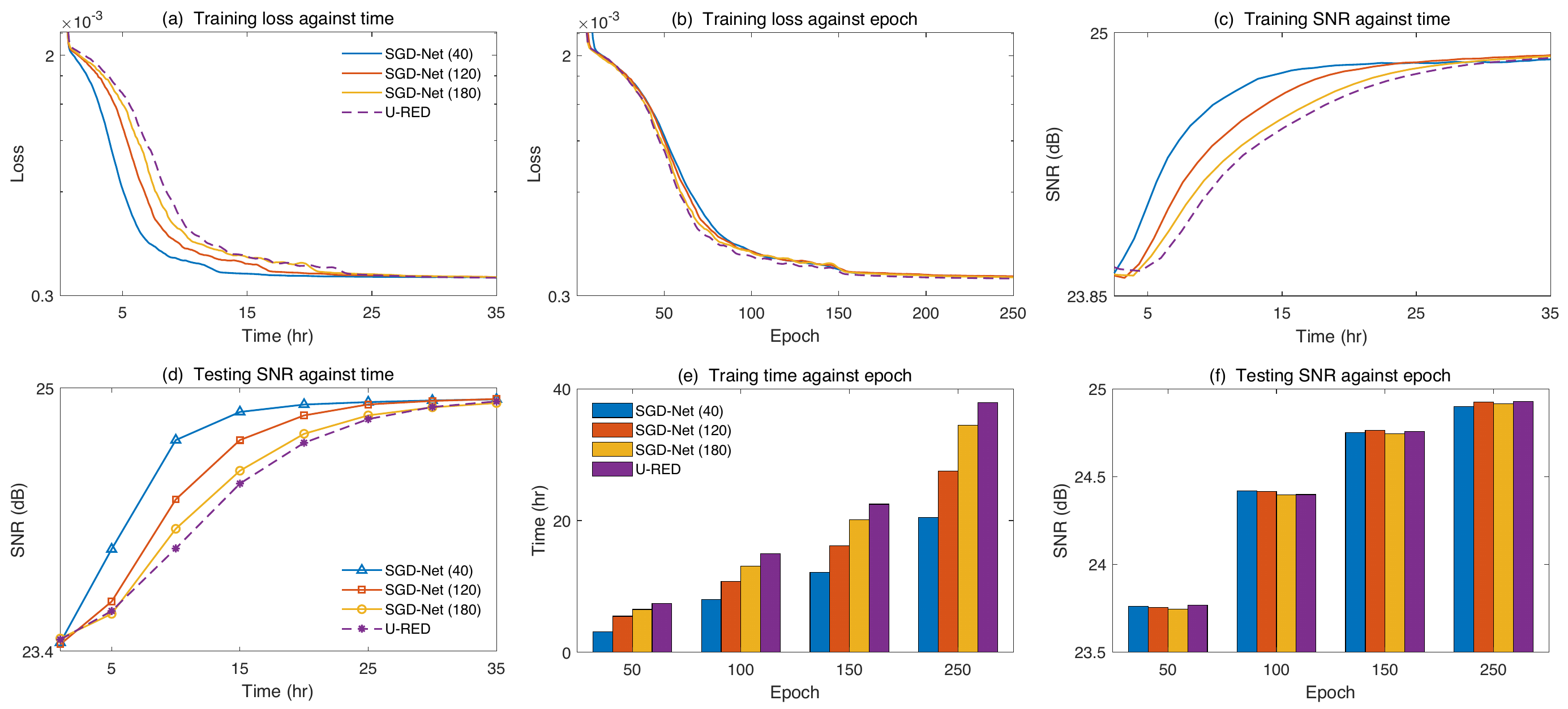}
\end{center}
\caption{Quantitative evaluation of SGD-Net on IDT for different minibatch sizes $B \in \{40, 120, 180\}$ used at each step of the network against U-RED using the full batch of $I = 240$ measurements. (a) Illustration of the loss against time in hours for different values of $B$ evaluated on the training set. (b) Illustration of the loss against the epoch number  evaluated on the training set. (c) Illustration of the SNR (dB) against time evaluated on the training set. (d) Illustration of the SNR (dB) against time in hours evaluated on the testing set. (e) Illustration of the amount of time required to reach a certain epoch for different values of $B$. (f) Illustration of the SNR (dB) achieved at different epochs for different values of $B$ evaluated over the testing set. The figure highlights that by using minibatches of size $1 \leq B \ll I$ one can achieve nearly $2 \times$ improvement in training time over U-RED for the same final imaging quality.}
\label{Fig:charplot}
\end{figure*}
%%%%%%%%%%%%%%%%

Let $\Tcal_{\thetabm;\phibm}(\xbmtilde, \ybm)$ be the batch \emph{unfolded RED (U-RED)} network, where $\thetabm$ are the \emph{learnable} parameters within $D_\thetabm$, $\phibm$ are \emph{physical} parameters within the data consistency layers, $\ybm$ is the measurement vector, and $\xbmtilde$ is the input to the network. End-to-end training seeks to compute the learnable parameters $\thetabm$ of $\Tcal_{\thetabm;\phibm}$ by minimizing a loss function $F$ over $M$ training samples $\{\xbm_j,\ybm_j\}$. Let $F_j$ be a loss functions over the training sample $(\xbm_j, \ybm_j)$, then the training is formulated as the following optimization problem over learnable parameters
\begin{equation}
\label{EqRED-Netobj}
\argmin_{\thetabm}\;\, F(\thetabm;\:\bm\phi)\quad\text{with}\;\; F(\thetabm;\:\bm\phi) \defn \frac{1}{M}\sum_{j=1}^{M}F_j(\thetabm;\:\phibm).
\end{equation}
For example, a popular choice for the loss function $F_j$ as the \emph{mean square error (MSE)} between the image predicted by $\Tcal_{\thetabm, \phibm}$ and the desired image $\xbm_j$
\begin{equation}
\label{EqRED-Netupdate_exp}
F_j(\thetabm:\phibm) = \left\|\Tcal_{\thetabm,\phibm}(\xbmtilde_j, \ybm_j) - \xbm_j\right\|_2^2 \text{ with }\xbmtilde_j= \Abm^\Hsf\ybm_j.
\end{equation}

One can train both U-RED and SGD-Net using gradient-based optimizers, such as the \emph{stochastic gradient descent (SGD)}~\cite{Robbins.Monro1951}. When training SGD-Net, iteration $k$ of training is performed by generating two sets of independent random variables. First, an index in $j_k $ is randomly selected from $\{1, \dots, M\}$, then, a stochastic approximation $\phihat^k$ of $\phibm$ is generated by replacing the batch gradients by their randomized minibatch approximations~\eqref{Eq:minibc_grad1} at every step of the network. The trainable weights can then be updated as
\begin{equation}
\label{EqSGD-update}
\thetabm^{k+1} = \thetabm^{k} - \eta_{k}\nablahat F(\thetabm^{k}; \phihat^k),
\end{equation}
where $\nablahat F(\thetabm^{k}; \phihat^k)  = \nabla F_{j_k}(\thetabm^{k}; \phihat^k)$ and $\eta_{k} > 0$ is the step-size at the training iteration $k \geq 0$. As illustrated in Fig.~\ref{Fig:SCREDNET}, SGD-Net can significantly reduce the complexity of the data-consistency layers by using $\phihat^k$ instead of the full $\phibm$. In Section~\ref{Sec:Theory}, we present a theoretical analysis of training SGD-Net using the SGD iteration~\eqref{EqSGD-update}.

We use a warm up strategy to initialize the network for training. We first train the artifact-removal CNN $R_{\thetabm}$ separately with the MSE loss in~\eqref{Eq:cnn_loss} after initializing the parameters with random values. This training is considerably faster than training the entire network since it has no data-consistency blocks. Since we used a recursive network with the same weights across iterations, the weights of the unfolded network at each iteration are initialized using the weights learned from the pre-trained network $R_{\thetabm}$.

\section{Theoretical analysis}
\label{Sec:Theory}

We now present the theoretical analysis of the training of SGD-Net by using the SGD algorithm in~\eqref{EqRED-Netupdate_exp}. Note that our analysis does not explicitly assume that the unfolded architecture corresponds to RED, which means that it is also applicable to other architectures, including those based on PnP.

Our theoretical analysis requires a number of technical assumptions that act as sufficient conditions for the main theorem below. Our first assumption is on Lipschitz continuity of the gradient of $F$ over both $\phibm$ and $\thetabm$.
\begin{assumption}
\label{As:LipschitzCon}
The function $F$ has a global minimizer $\thetabm^\ast$ and satisfies the following continuity assumption
\begin{equation*}
\begin{aligned}
\|\nabla &F(\thetabm_1;\phibm_1) -\nabla F(\thetabm_2;\phibm_2) \|_2\\
&\leq L_{\theta}\|\thetabm_1 - \thetabm_2\|_2 + L_{\phi}\|\phibm_1 - \phibm_2\|_2,
\end{aligned}
\end{equation*}
with $L_\theta, L_\phi > 0$ for every $(\thetabm_1, \phibm_1)$ and $(\thetabm_2, \phibm_2)$.
\end{assumption}
\noindent
The existence of a minimizer, and the Lipschitz continuity of the gradient are standard assumptions in traditional optimization~\cite{Nesterov2004}. Assumption~\ref{As:LipschitzCon} simply extends Lipschitz continuity over both sets of parameters within the cost function $F$. Note that we do \emph{not} assume that $F$ is convex.

A common assumption in stochastic optimization is that the minibatch gradients are unbiased estimators of the full gradient and have bounded variances~\cite{Ghadimi.Lan2016, Wu.etal2019}. Our second assumption states this condition for the physical parameters of the unfolded network.

\begin{assumption}
\label{As:UnbiasedAssumption}
The physical parameters of the network satisfy
\begin{equation*}
\E\left[\phihat^k\right]\,=\,\phibm, \quad\E\left[\|\phihat^k - \phibm\|_2^2\right]\,\leq\,\frac{\sigma^2}{B},
\end{equation*}
for all iterations $k \geq 0$, where $\sigma>0$ is a constant and $B \geq 1$ is the minibatch size.
\end{assumption}
\noindent
This is a mild assumption since in our SGD-Net implementation, $\phihat$ is obtained by replacing the full gradient $\nabla g$ at every step by its minibatch gradient $\nablahat g$ computed via~\eqref{Eq:minibc_grad1}. This automatically ensures that $\phihat$ is an unbiased estimator of $\phibm$. Our final assumption is related to Assumption~\ref{As:UnbiasedAssumption}, but considers the selection of the gradients $\nablahat F$ during SGD training.
\begin{assumption}
\label{As:StochCost}
The stochastic gradients in~\eqref{EqSGD-update} satisfy the following two conditions for any fixed vectors $\thetabm$ and $\phibm$.
\begin{enumerate} \renewcommand{\labelenumii}{(\arabic{enumii})}
     \item[(a)] The stochastic gradient is unbiased:  \[\E\left[\nablahat F(\thetabm;\phibm)\right]\,=\,\nabla F(\thetabm;\phibm) . \]
     \item[(b)]  The variance of the stochastic gradient is bounded: \[\E\left[\|\nablahat F(\thetabm; \phibm)- \nabla F(\thetabm;\phibm)\|_2^2\right]\,\leq\, \epsilon^2.\]
\end{enumerate}
The expectations  are taken with respect to the random index $j \in \{1, \dots, M\}$ used to select the training sample.
\end{assumption}
\noindent

\begin{theorem}
\label{Thm:MainConvRes}
Run the SGD learning in~\eqref{EqSGD-update} for $K \geq 1$ iterations under Assumptions~\ref{As:LipschitzCon}-\ref{As:StochCost} using the step-size parameters $1 \leq \eta_{k} \leq 1/L_\theta$  and the minibatch size $B \geq 1$. Then, the iterates generated by~\eqref{EqSGD-update} satisfy the bound
\begin{equation*}
\begin{aligned}
\label{EqTheorem1}
&\sum_{k=0}^{K-1}\eta_{k} \E\left[ \|\nabla F(\thetabm^{k};\,\phibm) \|_2^2\right] \leq 2(F(\thetabm^0; \phibm) - F(\thetabm^\ast; \phibm)) + \frac{L_\phi^2\sigma^2}{B}\left(\sum_{k=0}^{K-1} \eta_{k}\right)+L_{\theta}\epsilon^2 \left(\sum_{k = 0}^{K-1}\eta_{k}^2\right).
\end{aligned}
\end{equation*}
\end{theorem}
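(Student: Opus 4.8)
The plan is to run a standard non-convex SGD descent argument, but to carefully separate the two distinct sources of error: the random training index $j_k$ (governed by Assumption~\ref{As:StochCost}) and the stochastic approximation $\phihat^k$ of the physical parameters (governed by Assumption~\ref{As:UnbiasedAssumption}). Writing $g^k \defn \nabla F(\thetabm^k;\phibm)$ for the target gradient and abbreviating the actual step direction as $\nablahat F^k \defn \nablahat F(\thetabm^k;\phihat^k)$, I would first specialize Assumption~\ref{As:LipschitzCon} to $\phibm_1=\phibm_2=\phibm$ to get that $F(\cdot;\phibm)$ is $L_\theta$-smooth, and hence, after substituting the update~\eqref{EqSGD-update}, obtain the descent inequality
\[
F(\thetabm^{k+1};\phibm) \leq F(\thetabm^k;\phibm) - \eta_k\langle g^k,\nablahat F^k\rangle + \tfrac{L_\theta\eta_k^2}{2}\|\nablahat F^k\|_2^2 .
\]

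I would then take expectations in two stages via the tower property. Conditioning on $\phihat^k$ and averaging over $j_k$, Assumption~\ref{As:StochCost}(a) replaces $\nablahat F^k$ by $\tilde g^k \defn \nabla F(\thetabm^k;\phihat^k)$ in the cross term, and Assumption~\ref{As:StochCost}(b) bounds the conditional second moment by $\|\tilde g^k\|_2^2+\epsilon^2$. The crucial algebraic step is to handle the resulting quantity $-\eta_k\langle g^k,\tilde g^k\rangle + \tfrac{L_\theta\eta_k^2}{2}\|\tilde g^k\|_2^2$ not by a naive Cauchy--Schwarz/Young split (which inflates constants), but by writing $g^k=\tilde g^k-e^k$ with $e^k \defn \tilde g^k-g^k$, using $\eta_k\leq 1/L_\theta$ to absorb the quadratic term as $-\tfrac{\eta_k}{2}\|\tilde g^k\|_2^2$, and completing the square. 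This collapses the expression exactly to $-\tfrac{\eta_k}{2}\|g^k\|_2^2 + \tfrac{\eta_k}{2}\|e^k\|_2^2$, with no leftover first-order cross term.

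In the second stage I would average over $\phihat^k$. The physical error $e^k=\nabla F(\thetabm^k;\phihat^k)-\nabla F(\thetabm^k;\phibm)$ is controlled by Assumption~\ref{As:LipschitzCon} with $\thetabm$ fixed, giving $\|e^k\|_2\leq L_\phi\|\phihat^k-\phibm\|_2$, whence the variance bound of Assumption~\ref{As:UnbiasedAssumption} yields $\E[\|e^k\|_2^2]\leq L_\phi^2\sigma^2/B$. Collecting the surviving terms produces the one-step recursion $\E[F(\thetabm^{k+1};\phibm)] \leq \E[F(\thetabm^k;\phibm)] - \tfrac{\eta_k}{2}\E[\|g^k\|_2^2] + \tfrac{\eta_k}{2}\tfrac{L_\phi^2\sigma^2}{B} + \tfrac{L_\theta\eta_k^2}{2}\epsilon^2$. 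Summing over $k=0,\dots,K-1$, telescoping the function values, bounding $F(\thetabm^\ast;\phibm)\leq \E[F(\thetabm^K;\phibm)]$ since $\thetabm^\ast$ is a global minimizer, and multiplying through by $2$ recovers the claimed bound exactly.

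I expect the main obstacle to be the \emph{bias} coming from the nonlinear dependence of $\nabla F$ on $\phibm$: even though $\phihat^k$ is unbiased for $\phibm$, the composition $\nabla F(\thetabm^k;\phihat^k)$ is in general \emph{not} an unbiased estimator of $g^k$, so the textbook unbiased-SGD template does not apply verbatim. The completing-the-square maneuver is precisely what folds this bias into the $\E[\|e^k\|_2^2]$ term instead of leaving a stray cross term that would force a lossy Young's-inequality split; this is also what keeps the constants clean, so that $\epsilon^2$ appears only at order $\eta_k^2$ and $\sigma^2/B$ only at order $\eta_k$. Confirming that the step-size condition $\eta_k\leq 1/L_\theta$ is exactly what licenses discarding the quadratic term is the one point where the stated hypotheses must be invoked with care.
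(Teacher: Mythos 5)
Your proposal is correct and follows essentially the same route as the paper: the same $L_\theta$-smoothness descent inequality, the same two-stage conditioning (first over the training index $j_k$ via Assumption~\ref{As:StochCost}, then over $\phihat^k$ via Assumption~\ref{As:UnbiasedAssumption}), and the same completing-the-square identity $-\langle g^k,\tilde g^k\rangle+\tfrac{1}{2}\|\tilde g^k\|_2^2=\tfrac{1}{2}\|\tilde g^k-g^k\|_2^2-\tfrac{1}{2}\|g^k\|_2^2$ combined with $\|\tilde g^k-g^k\|_2\le L_\phi\|\phihat^k-\phibm\|_2$, which is exactly the paper's bound~\eqref{Eq:stcerrorBound}. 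Your remark that the bias of $\nabla F(\thetabm^k;\phihat^k)$ is absorbed into the $\E[\|e^k\|_2^2]$ term rather than handled by an unbiasedness argument is precisely the mechanism the paper's proof relies on.
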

\begin{proof}
See the appendix.
\end{proof}
Theorem~\ref{Thm:MainConvRes} allows us to establish various forms of convergence results by controlling the step-sizes $\eta_k$. For example, when $\eta_k  = 1/(L_\theta \sqrt{K})$, one obtains
$$\min_{k \in \{0,\dots,K-1\}} \E\left[\|\nabla F(\thetabm^k; \phibm)\|\right] \leq \frac{C}{\sqrt{K}} + \frac{L_\phi^2 \sigma^2}{B},$$
where $C = 2(F(\thetabm^0; \phibm)-F(\thetabm^\ast; \phibm)) L_\theta + \epsilon^2$ is a constant. This implies that SGD in~\eqref{EqSGD-update}, which relies only on the minibatch approximation $\phihat^k$ of $\phibm$, achieves, in expectation, the \emph{first-order necessary conditions} of optimality for~\eqref{EqRED-Netobj} up to an error term $L_\phi^2 \sigma^2/B$. This error term can be made as small as possible by controlling $B$ within SGD-Net. SGD-Net can thus be trained to approximate the batch deep unfolding network that uses $\phibm$ to any desired precision. Section~\ref{Sec:Experiments} validates this agreement for different values of $B$ by also showing substantial computational savings in training and testing.

\section{Numerical Validation}
\label{Sec:Experiments}
We now empirically validate SGD-Net in the context of two computational imaging modalities, IDT and sparse-view CT.  Our first goal is to validate the proposed theorems in Section~\ref{Sec:Theory} and the second one is to highlight the effectiveness and efficiency of our method for processing a large number of measurements. All the experiments were performed on a machine equipped with an Intel Xeon Gold 6130 Processor and eight NVIDIA GeForce RTX 2080 Ti GPUs.
%%%%%%%%%%%%%%%%
\begin{figure*} [t!]
    \centering
    \includegraphics[width=16cm]{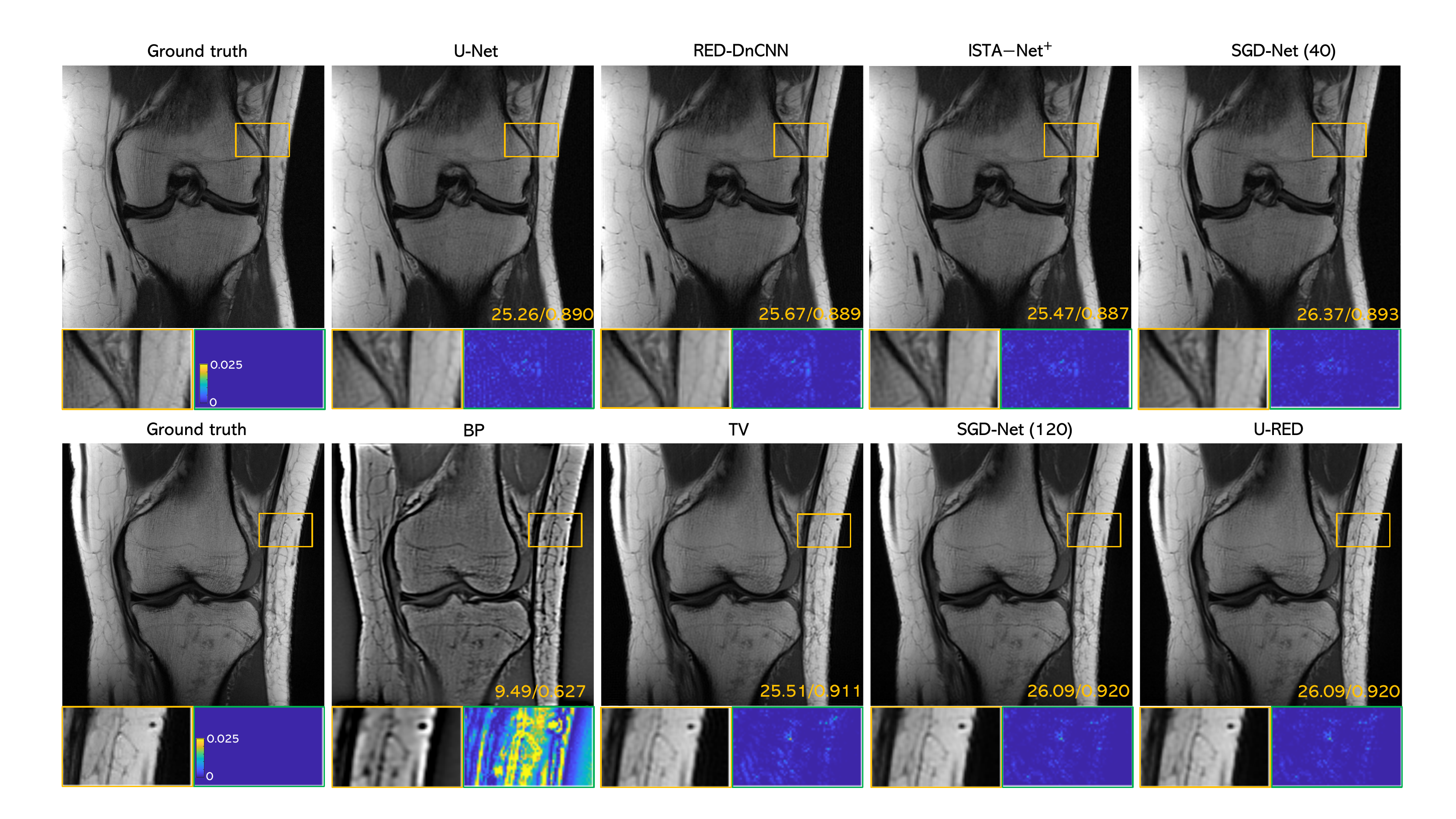}
    \renewcommand\arraystretch{1.3}
    \caption{Quantitative evaluation of several well-known methods on IDT under noise corresponding to input SNR of 20 dB. The total number of IDT measurements in this experiment is $I = 240$. All the baseline methods use the full set of measurements $I$ during reconstruction. U-RED corresponds to the full batch architecture that uses all the measurements at every step. SGD-Net (40) and SGD-Net (120) use minibatches of size $B = 40$ and $B = 120$, respectively, at every step.  Each image is labeled with its SNR (dB) and SSIM values with respect to the original image. The yellow box provides a close-up with a corresponding error map provided on its right.  The results highlight competitive performance of SGD-Net relative to several well-known methods, while also showing its ability to match the imaging quality achieved by the batch U-RED network.}
    \label{Fig:FigIDT}
\end{figure*}
%%%%%%%%%%%%%%%%
Several image reconstruction methods were used as references, including TV~\cite{Rudin.etal1992}, U-Net~\cite{Ronneberger.etal2015}, RED-DnCNN~\cite{Romano.etal2017} and ISTA-Net$^{+}$~\cite{zhang2018ista}.  TV is formulated in~\eqref{Eq:ForwardModel}, and was implemented using the accelerated proximal gradient descent method (APGM)~\cite{Beck.Teboulle2009a}. U-Net corresponds to our own implementation of the architecture used in~\cite{Sun.etal2018}. The network was trained in the usual supervised fashion using the $\ell_2$-loss~\cite{DJin.etal2017}. We adopted the DnCNN architecture used in~\cite{Metzler.etal2018} as the AWGN denoiser for RED.  The network has seventeen layers, including 15 hidden layers, an input layer, and an output layer. We have also experimented using RED with the U-Net like architecture from SGD-Net, but observed that this does not improve the image reconstruction quality achieved by RED. ISTA-Net$^+$ is a widely-used deep unfolding architecture based on the feed-forward network obtained by unfolding and truncating ISTA. We unfolded ISTA-Net$^+$~\footnote{The code for ISTA-Net$^+$ is publicly available at \url{https://github.com/jianzhangcs/ISTA-Net-PyTorch}.} for 12 steps and set the number of feature maps in each convolution layers equal to 64 for the best SNR performance in our experimental settings. Note that \emph{U-RED} uses the complete set of measurements $I$ corresponds to the traditional deep unfolding of the batch RED algorithm. All methods were implemented in Pytorch with a GPU backend.

%%%%%%%%%%%%%%%%
 \begin{table*}
\caption{SNR and SSIM values obtained by several reference methods for IDT. Note that the last two columns provide the GPU memory usage and the run-times for all the methods for reconstructing a 320$\times$320 image. The total number of measurement is $I=240$. Note the excellent balance between quality and complexity achieved by SGD-Net.}
    \centering
    \renewcommand\arraystretch{1.2}
    {\footnotesize
    \scalebox{0.8}{
    \begin{tabular}{|c|c|c|c|c|c|c|c|c|c|}
        \hline
        \bf Metric       & \multicolumn{3}{c|}{\bf SNR}                        & \multicolumn{3}{c|}{\bf SSIM}         &\multirow{3}{7em}{\centering \bf\#Iterations} & \multirow{3}{9em}{\centering \textbf{Size}\\Model/Measurement}& \multirow{3}{7em}{ \centering  \textbf{Time}  CPU/GPU}              \\  \cline{1-7}
        \diagbox[width=3.2cm, height=1.3cm]{\bf Method}{\bf Input-SNR\\ \bf(dB)} &20-5 &20 & 20+5 & 20-5 &  20 & 20+5 & & &\\ \hline\hline
        %\textit{views} & \multicolumn{1}{c}{90} & \multicolumn{1}{c}{180} & \multicolumn{1}{c}{90} & \multicolumn{1}{c}{180} \\ \hline
        TV       & 24.26                  & 24.31                  & 24.39                  & 0.887   & 0.890                 & 0.891    & \centering 250 & --------/1.01$\,$GB   & 87.58s/10.66s                \\
        U-Net        & 24.27                  & 24.33                  & 24.35                  & 0.887     & 0.889                  & 0.889   & \centering --  & 118.2$\,$MB/--------&  0.925s/0.012s           \\
        ISTA-Net$^{+}$       & 24.39                  & 24.41                  & 24.47            & {0.889}     & 0.890                  & 0.890  & \centering 12 & 6.9$\,$MB/1.01$\,$GB & 18.36s/0.402s      \\
        RED-DnCNN              & 24.54                  & 24.61                  & 24.67                  & 0.890     & 0.892                  & 0.893   & \centering 220       & 2.29$\,$MB/1.01$\,$GB   & 197.5s/4.144s             \\
        SGD-Net (40)      & {24.84}            & {24.94}            & {24.96}            & {0.896}     & 0.899                  & 0.901  & \centering 8  & 29.6$\,$MB/0.17$\,$GB & 7.443s/0.322s          \\
        SGD-Net (120)      & {24.87}            & {24.93}            & {24.94}            & {0.898}     & 0.899                  & 0.900    & \centering 8 & 29.6$\,$MB/0.51$\,$GB  & 16.51s/0.617s         \\
        U-RED       & {24.89}            & {24.93}            & {24.94}            & {0.898}         & 0.899                  & 0.900   & \centering 8  & 29.6$\,$MB/1.01$\,$GB & 31.23s/0.943s       \\ \hline
    \end{tabular}}
    }
    \label{Tab:table1}
% RED-DnCNN 197.5s
\end{table*}
%%%%%%%%%%%%%%%%

We used the following \emph{signal-to-noise ratio (SNR)} in dB for quantitively comparing different algorithms
\begin{equation}
\label{Eq:SNR}
\hbox{SNR}(\xbfhat,\xbm) = \max_{a,b\in \R} \left\{20 \log_{10}\left( \frac{\|\xbm\|_2}{\|\xbm - a\xbfhat + b\|_2}\right)  \right\},
\end{equation}
where $\xbfhat$ and $\xbm$ represents the noisy vector and ground truth respectively, while the purpose of $a$ and $b$ is to adjust for contrast and offset. We also used the \emph{structural similarity index measure (SSIM)}~\cite{Wang.etal2004} as an alternative metric.

\subsection{Intensity diffraction tomography}
IDT~\cite{Ling.etal18} is a data intensive computational imaging modality that seeks to recover the spatial distribution of the complex
permittivity contrast of an object given a set of its intensity-only measurements. In this problem, $\Abm$ consists of a set of $I$ complex measurement operators $[\Abm_1, \dots, \Abm_I]^{\Tsf}$, where each $\Abm_i$ is a convolution corresponding to the $i$th measurement $\ybm_i$. In the simulation, we extracted a random subset of 350 slices of 320$\times$320 images for training, 10 images for validation, and 35 images for testing from the NYU fastMRI Initiative database~\cite{knoll2020fastmri}. Followed by the experimental setup in~\cite{Ling.etal18, Wu.etal2020}, the simulated images are assumed to be on the focal plane $z\,=\,0\SI{}{\micro\metre}$ with LEDs located at $z_{\text{LED}} = -70 \SI{}{\milli\metre}$. The wavelength of the illumination was set to $\lambda = 630\SI{}{\nano\metre}$ and the background medium index was assumed to be water with $\epsilon_b = 1.33$.  We generated $I = \,$240 intensity measurements with $40\times$ microscope objectives (MO) and 0.65 numerical aperture (NA). All simulated measurements were additionally corrupted by AWGN corresponding to $\{15, 20, 25\}$ dB of input SNR.
%%%%%%%%%%%%%%%%%% Convergence %%%%%%%%%%%%%%%%%%%%%%%%
\begin{figure*}[t!]
\begin{center}
\includegraphics[width=16cm]{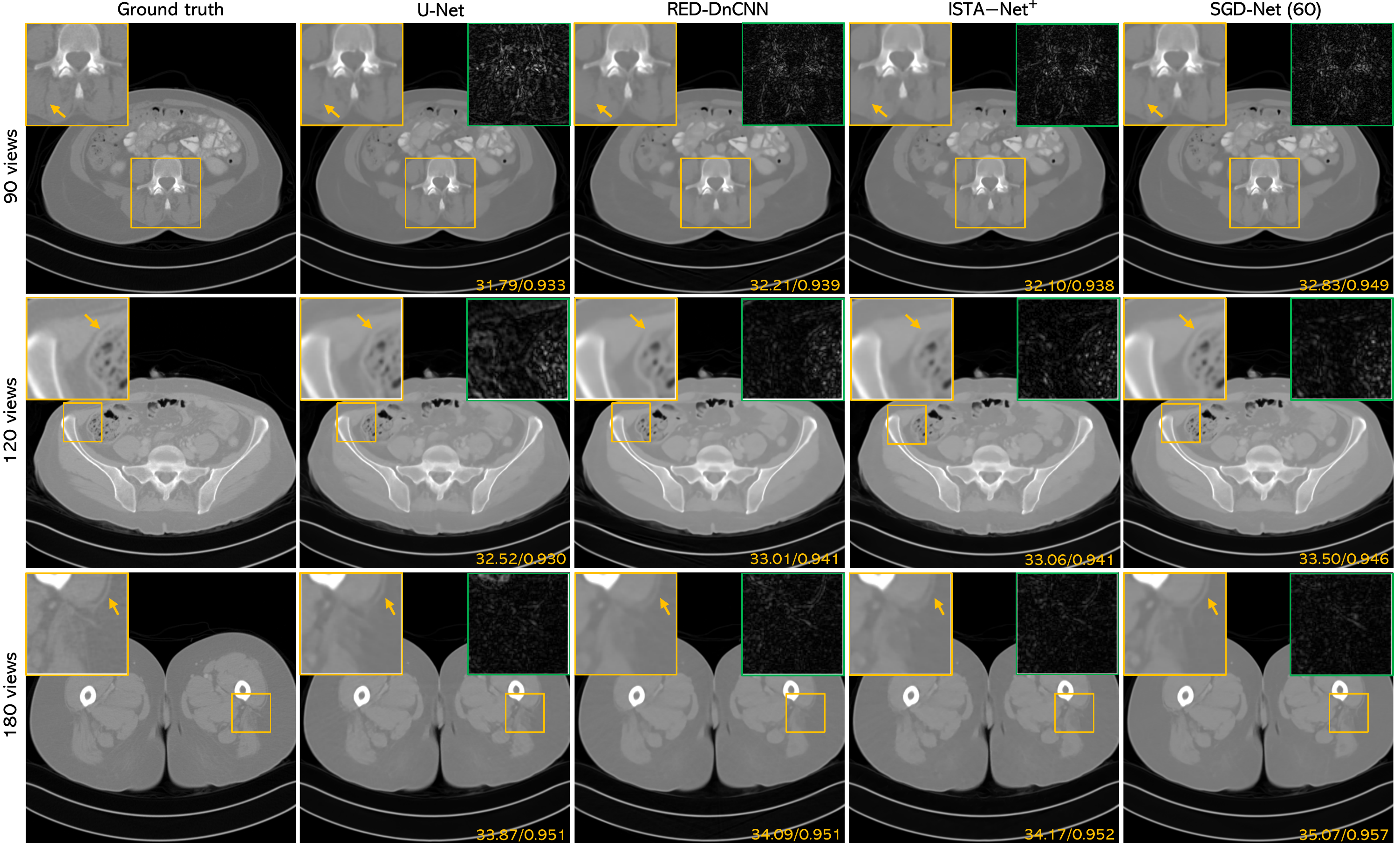}
\end{center}
\caption{Visual illustration of SGD-Net with $B=60$ relative to several well-known deep-learning baseline methods on sparse-view CT with $\{90, 120, 180\}$ views and noise of 50 dB input SNR. Note that RED-DnCNN and ISTA-Net$^{+}$ use the full set of measurements at each step. Each image is labeled with the corresponding SNR (dB) and SSIM values. This figure highlights that SGD-Net can achieve competitive performance relative to several well-known methods, while also providing a mechanism to reduce the complexity of data-consistency layers.}
\label{Fig:Fig4_v2}
\end{figure*}
%%%%%%%%%%%%%%%%%%%%%%%%%%%%%%%%%%%%%%%%%%%%%%%%%

We trained SGD-Net with the initialization $\xbm^0 = \Abm^{\Hsf}\ybm$, where $\Abm^{\Hsf}$ denotes the conjugate transpose. The proposed recursive model was unrolled for $Q=8$ steps and trained using SGD with minibatch size 1. While the step-size parameter of the data-consistency block in each step of SGD-Net was fixed to $\gamma=5\times10^{-3}$, the regularization parameter was set as a learnable parameter, initialized with $\tau=2$. The learning rate of SGD was set in two stages. In the first 150 epochs, we adopt the cyclic learning rate policy~\cite{Smith2017}, where the policy cycles the learning rate between $8\times10^{-3}$ and $4\times10^{-3}$ with 2000 training iterations in the decreasing half of a cycle. In stage 2, the learning rate was gradually reduced by a factor of 0.5 every 50 epochs. The number of total training epochs was 300. It is worth to note that each SGD-Net was trained with the noise corresponding to 20 dB of input SNR in order to test the stability of the proposed method with respect to changes in amount of measurements noise. Both TV and RED-DnCNN were initialized with $\xbm=\zerobm$, and we performed grid search to identify the optimal regularization parameters. For the DnCNN denoiser in RED, we trained it for AWGN removal at four noise levels corresponding to $\sigma\in$\{ 5, 10, 15, 20\}. For each experiment, we selected the denoiser achieving the highest SNR value. Several instances of U-Net and ISTA-Net$^+$ were trained by mapping the \emph{backprojection} (BP) $\Abm^{\Hsf}\ybm$  to the ground truth for each input SNR levels. We initialized the step-size and the regularization parameters of ISTA-Net$^+$ to the same values as SGD-Net, subsequently learned all these parameters during training, as done in the original paper.

Fig.~\ref{Fig:charplot} highlights the ability of SGD-Net to reduce the computational complexity of training compared to the full batch network U-RED. The three top plots compares the average loss and SNR achieved by SGD-Net when evaluated on the training set using different values for $B$ at each step. Note that U-RED uses a fixed set of full ($I=240$) measurements, while SGD-Net selects a random subset of $B$ measurements at every step. Fig.~\ref{Fig:charplot}(d) presents the average SNR achieved by SGD-Net against the training time evaluated on the testing set. Fig.~\ref{Fig:charplot}(e) highlights the time necessary to run a fixed number of epochs for different values of $B$. Fig.~\ref{Fig:charplot}(f) shows the SNR achieved by SGD-Net for different epochs. Note that the average time of training 250 epochs of SGD-Net using $B \in \{40, 120, 180\}$ and U-RED was 20.50 hours, 27.55 hours, 34.51, hours and 37.90 hours, respectively.  We did not observe significant SNR differences when using smaller minibatches for training compared with the usage of the full measurements. This highlights the ability of SGD-Net to reduce the complexity in deep unfolding by maintaining excellent imaging quality.

%%%%%%%%%%%%%%%%
\begin{table*}[t!]
 \centering
 \caption{SNR and SSIM values obtained by several reference methods for the reconstruction of a 512$\times$512 image in sparse-view CT  with noise of 50 dB input SNR. The highest SNR and SSIM values are in bold. Note that the last two rows provide the average test-times for all the competing methods for 180 views on GPU and CPU. SGD-Net enables one to balance the time complexity of reconstruction against the final imaging quality.}
    \scalebox{0.7}{
    \begin{tabular}{|c|c||cccccccc|}
        \hline
     \multirow{3}{5em}{\centering \textbf{Views}  } & \multirow{3}{5em}{\centering \textbf{Metric} } & \multicolumn{8}{c|}{\textbf{Method} } \\
     \cline{3-10}
     & &\multirow{2}{4.2em}{\centering  FBP  }  & \multirow{2}{4.2em}{\centering  TV  }  &\multirow{2}{4.2em}{ \centering U-Net  }  &\multirow{2}{4.2em}{\centering   RED-DnCNN }  & \multirow{2}{4.2em}{\centering ISTA-Net$^{+}$ }  & \multirow{2}{4.2em}{\centering SGD-Net\;(30)} & \multirow{2}{4.2em}{\centering SGD-Net\;(60)} &  \multirow{2}{4.2em}{\centering U-RED} \\
      & & & & & & & & &  \\ \hline \hline
        \multirow{2}{5em}{\centering 90}&SNR            & \centering17.56  & \centering 30.09 &\centering  31.17    & \centering 31.93 & \centering 32.01 & \centering 32.76 & \centering \bf 32.88  &32.87 \\
        &SSIM           & 0.362 & 0.924  & 0.930    & 0.935 & 0.934& 0.942& \bf 0.943 & \bf 0.943 \\ \hline
         \multirow{2}{5em}{\centering 120}&SNR            & 20.03  & 31.23 & 32.54    & 33.13 & 33.17 & 33.91 &33.95  & \bf 34.01 \\
        &SSIM           & 0.449 & 0.929 & 0.936    & 0.941 & 0.940 & 0.948& 0.949 &\bf 0.950  \\ \hline
         \multirow{2}{5em}{\centering 180}&SNR            & 23.19  & 32.97  & 34.04    & 34.49 & 34.61& 35.44 & \bf 35.46  & \bf 35.46 \\
        &SSIM           & 0.582 & 0.940 & 0.948    & 0.950 & 0.951 & 0.957 & \bf 0.958 & \bf 0.958  \\ \hline\hline
       \multirow{2}{7em}{\centering \textbf{Time}\\(views=180)}&CPU            & 0.859s  & 304.1s & 2.061s    & 460.3s & 15.95s & 11.56s & 13.31s  & 20.72s \\
        &GPU           & 0.147s & 13.58s & 0.217s    & 5.177s & 0.331s & 0.269s & 0.278 & 0.325  \\ \hline
    \end{tabular}}
\label{Tab:table2}
\end{table*}
%%%%%%%%%%%%%%%%

%%%%%%%%%%%%%%%%%%
\begin{figure}
\begin{center}
\includegraphics[width=10cm]{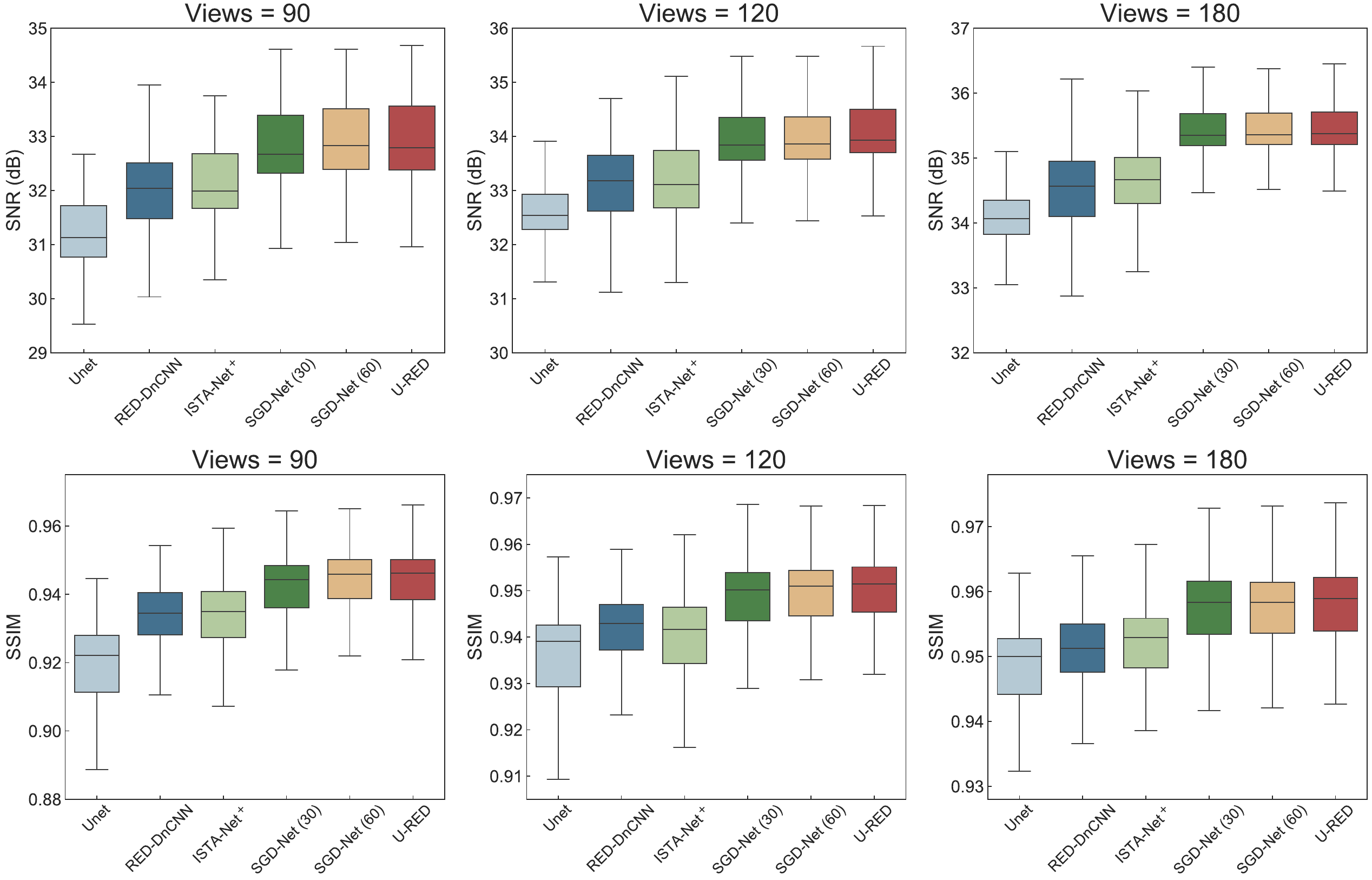}
\end{center}
\caption{Quantitative evaluation of several reference methods on sparse-view CT with noise corresponding to 50 dB of input SNR. The top row quantifies the image quality in terms of SNR (dB), while the bottom row in terms of SSIM. Columns from left to right provide results for 90, 120, and 180 views. This figure highlights that the usage of minibatches within SGD-Net does not reduce its ability to achieve high imaging quality.}
\label{Fig:boxplot}
\end{figure}
%%%%%%%%%%%%%%%%

Table~\ref{Tab:table1} provides the final SNR and SSIM values achieved by SGD-Net and several baseline methods when applied to IDT at three noise levels. Overall, model-based deep learning methods, such as RED-DnCNN, ISTA-Net$^+$ and SGD-Net, achieve the best performances.  Moreover,  SGD-Net using $B  = 40$ and $B = 120$ match the performance of the batch algorithms in terms of the final reconstruction quality. The runtime in the table corresponds to the average inference time that excludes the model loading. Specifically, SGD-Net with $B=40$ is around $2.9\times$ faster than U-RED on GPU and around $4.2\times$ faster than U-RED on CPU with parallel processing. The memory column in the table corresponds to the usage of GPU memory. Specifically, the memory considerations in image reconstruction must take into account the size of all variables related to the image volume $\xbm$, the measured data $\{\ybm_i\}$, and the measurement operators $\{\Abm_i\}$. SGD-Net addresses the problems where the bottleneck is in the storage and processing of the measurements and measurement operators on GPU for the end-to-end training. Our implementation stores each $\Abm_i$ as two separate arrays for phase and absorption. In addition, each matrix is stored in the Fourier space to reduce the computational complexity of evaluating convolutions. This results in the storage of complex valued arrays for each, consisting of pairs of single precision floats for every element. The real and imaginary parts of each array were then separated into two input channels of SGD-Net. Thus, the shape of each measurements and measurement operators  in U-RED for reconstructing one slice is $1\times 320\times320\times240\times2$. A detailed discussion on the IDT forward model is available in~\cite{Ling.etal18, Wu.etal2020}. While U-RED requires $1.01$ GB of GPU memory due to its processing of all measurements in every iteration,  SGD-Net with $B=40$ requires only $0.17$ GB, which is about $1/6$th of the full volume. This highlights the potential of applying SGD-Net to large scale image reconstruction.

Fig.~\ref{Fig:FigIDT} provides some visual examples highlighting the imaging quality obtained by SGD-Net relative to several baseline methods. Specifically, the top row of Fig.~\ref{Fig:FigIDT} presents the results obtained by several learning-based methods.  As shown in the zoomed regions and the corresponding error maps, SGD-Net with $B=40$ outperforms all other methods, while the performance of U-Net is suboptimal due to its inability to leverage explicit data-consistency layers. The bottom row of Fig.~\ref{Fig:FigIDT} highlights the comparable quality obtained by SGD-Net with $B=120$ and that using all $I = 240$ measurements.

\subsection{Sparse-view CT}
Conventional CT requires many views for high-quality image reconstruction. In the following experiments, we explore the possibility of high-quality imaging when reducing the number of views in CT imaging. We consider reconstruction of simulated data obtained from the clinically realistic CT images provided by Mayo Clinic for the \emph{AAPM Low Dose CT grand Challenge}~\cite{mccollough2016tu}. The data from $7$ patients was used for training, one patient data for validation, and two patients' data for testing. This provides us with $2070$ slices of $512 \times 512$ images for training, $150$ slices of $512\times 512$ images for validation. The testing data consists of $275$ slices of $512\times512$ images. We implemented $\Abm$ and $\Abm^{\Hsf}$ with \texttt{RayTransform} in Operator Discretization Library (ODL)\footnote{The code for ODL is publicly available at \url{https://github.com/odlgroup/odl}.}~\cite{adler.etal2017}, which uses GPU accelerated \texttt{astra-gpu} backend~\cite{vanAarle.etal16}. In particular, the scanning geometry is a fan-beam source with $I \in \{90, 120, 180\}$ projection views equally distributed around $360^{\circ}$ and 1447 detector pixels.  The sinograms were generated by slightly perturbing the angles of views by a zero-mean AWGN with standard deviation of 0.003 degrees so as to make the experiments more realistic~\cite{Gupta2018}.  We  added Gaussian noise to the sinograms to make the input SNR equal to 50 dB.

%%%%%%%%%%%%%%%%%% Convergence %%%%%%%%%%%%%%%%%%%%%%%%
\begin{figure*}[t!]
\begin{center}
\includegraphics[width=16cm]{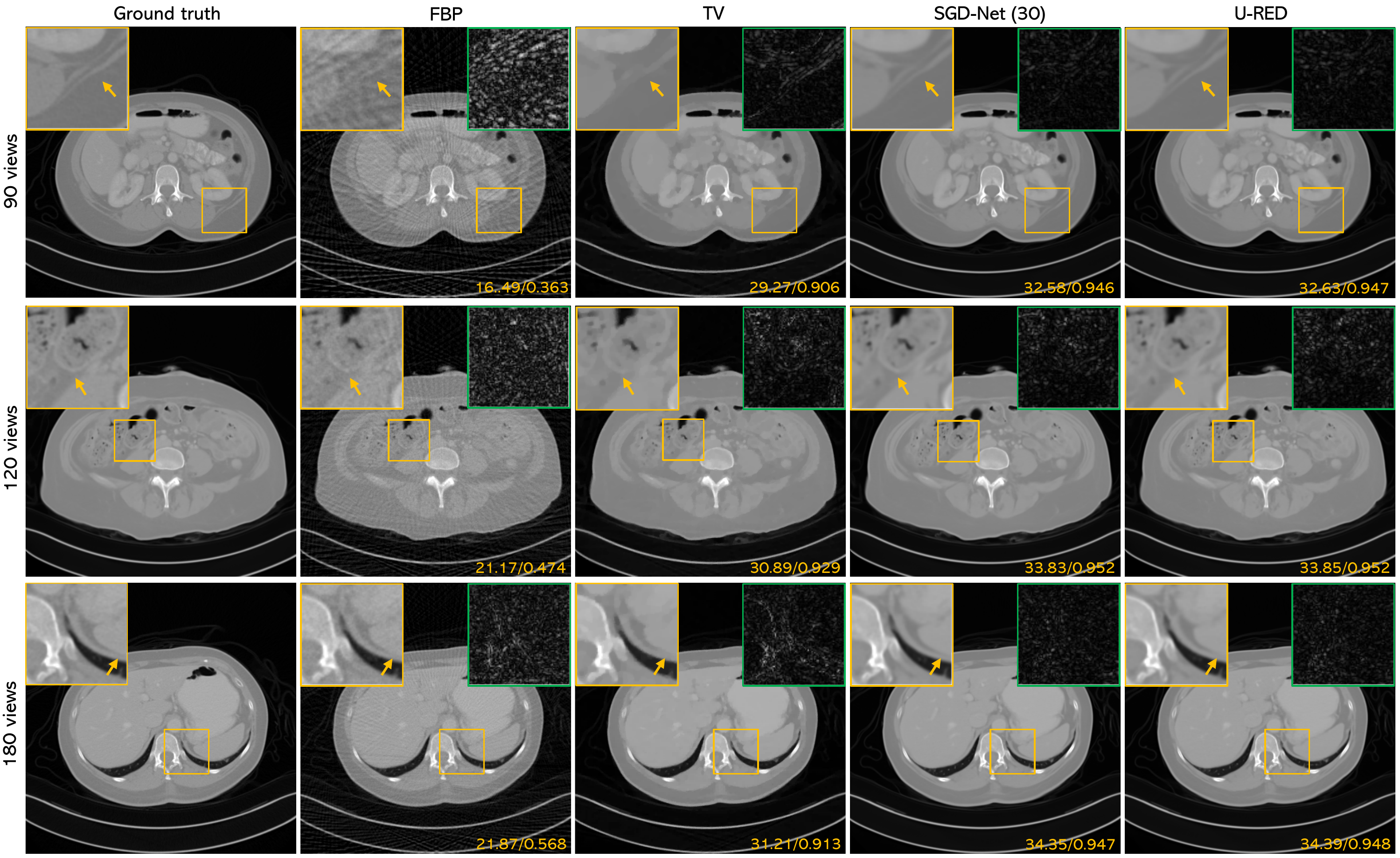}
\end{center}
\caption{Visual illustration of SGD-Net with $B=30$ relative to FBP, TV, and U-RED on sparse-view CT with $\{90, 120, 180\}$ views and noise of 50 dB input SNR.  See Table~\ref{Tab:table2} for quantitative and Fig.~\ref{Fig:Fig4_v2} for visual comparisons with additional reference methods. Note that SGD-Net provides substantial improvements over traditional image reconstruction methods, matching the performance of U-RED that uses all the measurements in each step of the network. }
\label{Fig:Fig5_v1}
\end{figure*}
%%%%%%%%%%%%%%%%%%%%%%%%%%%%%%%%%%%%%%%%%%%%%%%%%
%%%%%%%%%%%%%%%%%%
\begin{figure*}[t!]
\begin{center}
\includegraphics[width=16cm]{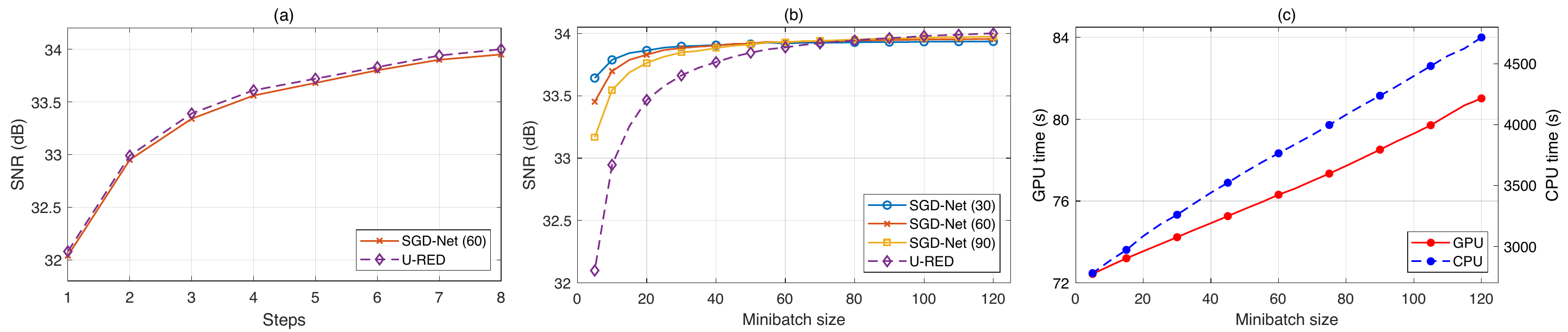}
\end{center}
\caption{Quantifying the influence of the number of steps and minibatch sizes. (a) SNR versus number of steps for the full batch network and SGD-Net with $B = 60$. Note how SGD-Net matches the full batch network in all settings. (b) SGD-Net architectures trained with certain minibatch sizes are tested on different minibatch sizes. This plot shows that networks trained on smaller minibatches achieve better generalization. (c) Runtime of SGD-Net for different minibatch sizes on GPU and CPU at test time. One can achieve significant savings for both GPU and CPU implementations with small minibatches.}
\label{Fig:bsSpeed}
\end{figure*}
%%%%%%%%%%%%%%%%

We trained the SGD-Net by using the filtered backpojection (FBP) initialization $\xbm^0 = \Abm^{\Hsf}\Fbm\ybm$. The FBP is performed with a Hann filter by using the method \texttt{fbp-op} in ODL. We set the number of steps in SGD-Net to $Q=8$. For these experiments, we train SGD-Net using the ADAM solver~\cite{Kingma.Ba2015}, using the minibatch size $2$ and weight decay $2\times10^{-8}$. For physical parameters, we fixed the step-size in each step to $\gamma=5\times10^{-3}$ and initialized the trainable regularization parameter to $\tau=4$. The learning rate starts from $1\times10^{-3}$ and is halved at epoch 20, then gradually reduced by a factor of $0.7$ every $10$ epochs. The number of total training epochs is $100$. We adopt gradient clipping~\cite{Zhang2020Why} in order to accelerate and stabilize the training. We also applied the same training schemes for U-Net and ISTA-Net$^+$. We set the number of iterations for TV and RED to $240$. For DnCNN in RED, we trained it for AWGN removal at four noise levels corresponding to $\sigma\in \{2, 5, 10, 15\}$.  Several instances of U-Net and ISTA-Net$^+$, corresponding to different numbers of views, were trained by mapping the FBP  $\Abm^{\Hsf}\Fbm\ybm$  to the ground truth for each sparse view.

Table~\ref{Tab:table2} provides the SNR and SSIM values obtained by SGD-Net and the baseline methods when applied to sparse-view CT with different projection views.  Overall, all methods offer significant gains over FBP, with model-based deep learning methods (RED-DnCNN, ISTA-Net$^+$ and SGD-Net) achieving the best performance. Moreover,  SGD-Net using $B \in \{30, 60\}$ achieves comparable image reconstruction quality to U-RED, which highlights the potential of using minibatches within deep unfolding networks. Similar conclusions can be drawn by Fig.~\ref{Fig:boxplot}, which shows the statistical evaluation of SGD-Net and other deep-learning methods. Visual inspection of the results highlight the excellent performance of SGD-Net. Specifically, Fig.~\ref{Fig:Fig4_v2} presents the reconstruction results by all the learning-based methods.  As shown in the zoomed regions and the corresponding error maps, the average performance of U-Net is lower than all other methods since it does not explicitly leverage the data-consistency information. SGD-Net with $B=60$ performs better than other methods on all individual slices, highlighted by the yellow arrow. In Fig.~\ref{Fig:Fig5_v1},  FBP is dominated by streaking artifacts, while TV reduces those artifacts, but blurs the fine structures by producing cartoon-like features. The zoomed regions suggest that SGD-Net with $B=30$ can accurately reconstruct the fine details as good as its batch version using all the measurements.

Fig.~\ref{Fig:bsSpeed} provides several additional evaluations highlighting the influence of the number of steps and minibatch sizes. Fig.~\ref{Fig:bsSpeed}(a) shows the SNR performance versus number of unrolling steps. It can be observed that the average SNR values improve as we increase the number of model steps, and SGD-Net with $B = 60$ consistently achieves a similar performance with the full batch unrolled RED algorithm. Fig.~\ref{Fig:bsSpeed}(b) shows the performance of SGD-Net trained for $B \in \{30, 60, 90\}$ and U-RED when tested with different minibatch sizes. This results highlights the robustness of pre-trained SGD-Net to the changes in minibatch sizes. One can observe that U-RED degrades much faster than SGD-Net with $B=30$ when using smaller minibatches. Fig.~\ref{Fig:bsSpeed}(c) shows the computational time of using different minibatch sizes for reconstructing all the testing slices. We conducted 40 trials and 20 trails to calculate the average performance for each minibatch on GPU and CPU, respectively. 
%While SGD-Net with $B=20$ can accelerate inference by a factor $1.1\times$ over U-RED on GPU, the speed-up is improved to around $1.6\times$ on CPU.

%%%%%%%%%%%%%%%%%%%%%%%%%%%%%%%%%%%%%%%%%%%%%
%% Conclusion
%%%%%%%%%%%%%%%%%%%%%%%%%%%%%%%%%%%%%%%%%%%%%

\section{Conclusion}
\label{Sec:Conclusion}
The proposed SGD-Net method introduces stochastic approximation to the data-consistency layers of deep unfolding networks. Such approximations lead to complexity reductions during both training and testing of the complete network, making the network potentially applicable for problems with a large number of measurements. We provided extensive numerical results motivating the practical relevance of SGD-Net. Our results indicate that SGD-Net provides competitive imaging quality compared to the traditional and learning-based methods due to the training of the priors in conjunction with the forward model. In particular, SGD-Net using small minibatches achieves the SNR performance of the network using all the available measurements at a fraction of complexity. While our experiments focused on IDT and CT, the method is broadly applicable to many other imaging modalities such as optical diffraction tomography (ODT)~\cite{Kamilov.etal2016} and photoacoustic tomography (PAT)~\cite{Hauptmann.etal2018}, where the evaluation of the measurement model is computationally intensive. Additionally, while we implemented SGD-Net based on the RED framework, the idea can be used for other model-based deep learning architectures.

\appendix
\section{Proof of Theorem~\ref{Thm:MainConvRes}}
Assumption~\ref{As:LipschitzCon} leads to two bounds. By setting $\phibm_1 = \phibm_2=\phibm$ in Assumption~\ref{As:LipschitzCon}, we obtain the traditional Lipschitz continuity bound on the gradient
$$\|\nabla F(\thetabm_1; \phibm) -\nabla F(\thetabm_2; \phibm) \|_2^2 \leq L_{\theta}\|\thetabm_1 - \thetabm_2\|_2^2,$$
for all $\thetabm_1$, $\thetabm_2$, and $\phibm$, which directly leads to the traditional quadratic upper bound (see Lemma 1.2.3 in~\cite{Nesterov2004})
\begin{align}
\label{Eq:QuadBound}
F(\thetabm_1; \phibm) \leq F(\thetabm_2; \phibm) &+ \nabla F(\thetabm_2; \phibm)^{\Tsf}(\thetabm_1 - \thetabm_2) +\frac{ L_{\theta}}{2}\|\thetabm_1-\thetabm_2\|_2^2,
\end{align}
By setting $\thetabm_1=\thetabm_2=\thetabm$ in Assumption~\ref{As:LipschitzCon}, we obtain the following useful bound for our proof
\begin{equation}
\label{Eq:stcerrorBound}
\begin{aligned}
&\frac{1}{2}\|\nabla F(\thetabm; \phibm) -\nabla F(\thetabm; \phibmhat) \|_2^2 \leq \frac{ L_{\phi}^2}{2}\|\phibm - \phibmhat\|_2^2\\
&\Leftrightarrow \quad -\nabla F(\thetabm; \phibm)^{\Tsf}\nabla F(\thetabm; \phibmhat) + \frac{1}{2}\| F(\thetabm; \phibmhat)\|_2^2\\
&\quad\quad\quad\quad\quad\quad\quad\leq\frac{ L_{\phi}^2}{2}\|\phibm - \phibmhat\|_2^2 - \frac{1}{2}\|\nabla F(\thetabm; \phibm)\|_2^2\\
\end{aligned}
\end{equation}
The unbiasedness and boundedness of the variance of the stochastic gradient in Assumption~\ref{As:StochCost} implies that for any fixed vector $\thetabm$ and $\phibm$, we have that
\begin{equation}
\label{Eq:StochGradBound}
\begin{aligned}
&\E\left[\|\nablahat F(\thetabm;  \phibm)-  \nabla F(\thetabm; \phibm)\|_2^2 | \thetabm, \phibm\right] \leq \epsilon^2\\
&\Leftrightarrow \quad \E\left[\|\nablahat F(\thetabm; \phibm)\|_2^2 | \thetabm, \phibm\right] \leq\, \|\nabla F(\thetabm; \phibm)\|_2^2 + \epsilon^2,
\end{aligned}
\end{equation}
where the expectation is taken over the index $j \in \{1, \dots, M\}$ of the stochastic gradient.

Now, we are ready to establish the result in Theorem 1. Consider a single iteration of optimizing SGD-Net with SGD
$$\thetabm^{k+1}=\thetabm^k - \eta_k \nablahat F(\thetabm^k \phibmhat^k).$$
From the quadratic upper~\eqref{Eq:QuadBound}, we get
\begin{align}
\label{Eq:quadupperBound}
&  F(\thetabm^{k+1}; \phibm)  - F(\thetabm^{k}; \phibm)\\
\nonumber&\leq \nabla F(\thetabm^k; \phibm)^{\Tsf}(\thetabm^{k+1} - \thetabm^k) + \frac{ L_{\theta}}{2}\|\thetabm^{k+1}-\thetabm^k\|_2^2\\
\nonumber&= -\eta_{k}\nabla F(\thetabm^k; \phibm)^{\Tsf} \nablahat F(\thetabm^k; \phibmhat^k)+ \frac{\eta_{k}^2 L_{\theta}}{2}\|\nablahat F(\thetabm^k; \phibmhat^k)\|_2^2,
\end{align}
where $\phibm$ represents the true physical parameters and $\phibm^k$ the stochastic approximation.
By taking the conditional expectation with respect to the previous learning $\thetabm^k$ and physical paramters $\phihat^k$, using~\eqref{Eq:stcerrorBound} and~\eqref{Eq:StochGradBound}, we obtain
\begin{align}
\label{Eq:expectBound}
\nonumber& \E\left[F(\thetabm^{k+1}; \phibm) | \thetabm^k, \phihat^k\right] - F(\thetabm^k; \phibm)\\
\nonumber&\leq-\eta_{k}\nabla F(\thetabm^k; \phibm)^{\Tsf} \E\left[\nablahat F(\thetabm^k; \phibmhat^k) | \thetabm^k, \phibmhat^k\right] + \frac{\eta_{k}}{2}\|\nabla F(\thetabm^k; \phibmhat^k)\|_2^2 +\frac{\eta_{k}^2L_{\theta}\epsilon^2}{2}\\
\nonumber&\leq -\frac{\eta_{k}}{2}\|\nabla F(\thetabm^k; \phibm)\|_2^2+ \frac{ \eta_{k} L_{\phi}^2}{2}\|\phibm - \phibmhat^k\|_2^2 +\frac{\eta_{k}^2L_{\theta}\epsilon^2}{2},
\end{align}
%\nonumber&\leq -\eta_{k}\nabla F(\thetabm^k; \phibm)^{\Tsf}\nabla F(\thetabm^k; \phibmhat)+ \frac{\eta_{k}}{2}\|\nabla F(\thetabm^k; \phibmhat)\|_2^2 +  \frac{\eta_{k}^2L_{\theta}\epsilon^2}{2N}\\
where we also used the unbiasedness of the stochastic gradient $\nablahat F$ and the fact that $0 < \eta_k \leq 1/L_{\theta}$. By rearranging the terms, using Assumption~\ref{As:UnbiasedAssumption}, and taking the law of total expectation, we get
\begin{equation}
\begin{aligned}
\nonumber& \E\left[\eta_{k}\|\nabla F(\thetabm^k; \phibm)\|_2^2\right] \leq 2 (\E\left[F(\thetabm^k; \phibm)\right] - \E\left[F(\thetabm^{k+1}; \phibm)\right]) + \left(\frac{ \eta_{k} L_{\phi}^2\sigma^2}{B} + \eta_{k}^2 L_{\theta}\epsilon^2\right).
\end{aligned}
\end{equation}
By summing this bound over $0 \leq k \leq K-1$, we get the main result
\begin{equation}
\begin{aligned}
\nonumber&\sum_{k=0}^{K-1}\eta_{k} \E\left[ \|\nabla F(\thetabm^{k}; \phibm) \|_2^2\right]\\
&\leq{2(F(\thetabm^0; \phibm) - \E\left[F(\thetabm^K; \phibm))\right]+\sum_{k=0}^{K-1} (\frac{\eta_{k}L_{\phi}^2\sigma^2}{B} }+\eta_{k}^2 L_{\theta}\epsilon^2)\\
&\leq2(F(\thetabm^0; \phibm) - F(\thetabm^{\ast}; \phibm)) +\sum_{k=0}^{K-1} (\frac{\eta_{k}L_{\phi}^2\sigma^2}{B} +\eta_{k}^2 L_{\theta}\epsilon^2),
\end{aligned}
\end{equation}
where we used the fact that $F(\thetabm^{\ast}; \phibm) \leq \E\left[F(\thetabm^K; \phibm))\right]$, with $\thetabm^{\ast}$ denoting a global minimizer of $F$.

\section*{Acknowledgements}

Research presented in this article was supported by NSF award CCF-1813910 and the Laboratory Directed Research and Development program of Los Alamos National Laboratory under project number 20200061DR.

\bibliographystyle{IEEEtran}

\begin{thebibliography}{10}
\providecommand{\url}[1]{#1}
\csname url@samestyle\endcsname
\providecommand{\newblock}{\relax}
\providecommand{\bibinfo}[2]{#2}
\providecommand{\BIBentrySTDinterwordspacing}{\spaceskip=0pt\relax}
\providecommand{\BIBentryALTinterwordstretchfactor}{4}
\providecommand{\BIBentryALTinterwordspacing}{\spaceskip=\fontdimen2\font plus
\BIBentryALTinterwordstretchfactor\fontdimen3\font minus
  \fontdimen4\font\relax}
\providecommand{\BIBforeignlanguage}[2]{{%
\expandafter\ifx\csname l@#1\endcsname\relax
\typeout{** WARNING: IEEEtran.bst: No hyphenation pattern has been}%
\typeout{** loaded for the language `#1'. Using the pattern for}%
\typeout{** the default language instead.}%
\else
\language=\csname l@#1\endcsname
\fi
#2}}
\providecommand{\BIBdecl}{\relax}
\BIBdecl

\bibitem{Rudin.etal1992}
L.~I. Rudin, S.~Osher, and E.~Fatemi, ``Nonlinear total variation based noise
  removal algorithms,'' \emph{Physica D}, vol.~60, no. 1--4, pp. 259--268, Nov.
  1992.

\bibitem{Figueiredo.Nowak2001}
M.~A.~T. Figueiredo and R.~D. Nowak, ``Wavelet-based image estimation: An
  empirical {B}ayes approach using {J}effreys' noninformative prior,''
  \emph{IEEE Trans. Image Process.}, vol.~10, no.~9, pp. 1322--1331, Sep. 2001.

\bibitem{Elad.Aharon2006}
M.~Elad and M.~Aharon, ``Image denoising via sparse and redundant
  representations over learned dictionaries,'' \emph{IEEE Trans. Image
  Process.}, vol.~15, no.~12, pp. 3736--3745, Dec. 2006.

\bibitem{Danielyan.etal2012}
A.~Danielyan, V.~Katkovnik, and K.~Egiazarian, ``{BM3D} frames and variational
  image deblurring,'' \emph{IEEE Trans. Image Process.}, vol.~21, no.~4, pp.
  1715--1728, Apr. 2012.

\bibitem{Hu.etal2012}
Y.~Hu, S.~G. Lingala, and M.~Jacob, ``A fast majorize-minimize algorithm for
  the recovery of sparse and low-rank matrices,'' \emph{IEEE Trans. Image
  Process.}, vol.~21, no.~2, pp. 742--753, Feb. 2012.

\bibitem{Lefkimmiatis.etal2013}
S.~Lefkimmiatis, J.~P. Ward, and M.~Unser, ``Hessian {S}chatten-norm
  regularization for linear inverse problems,'' \emph{IEEE Trans. Image
  Process.}, vol.~22, no.~5, pp. 1873--1888, May 2013.

\bibitem{McCann.etal2017}
M.~T. McCann, K.~H. Jin, and M.~Unser, ``Convolutional neural networks for
  inverse problems in imaging: A review,'' \emph{IEEE Signal Process. Mag.},
  vol.~34, no.~6, pp. 85--95, 2017.

\bibitem{Lucas.etal2018}
A.~Lucas, M.~Iliadis, R.~Molina, and A.~K. Katsaggelos, ``Using deep neural
  networks for inverse problems in imaging: {B}eyond analytical methods,''
  \emph{IEEE Signal Process. Mag.}, vol.~35, no.~1, pp. 20--36, Jan. 2018.

\bibitem{Ongie.etal2020}
G.~Ongie, A.~Jalal, C.~A. Metzler, R.~G. Baraniuk, A.~G. Dimakis, and
  R.~Willett, ``Deep learning techniques for inverse problems in imaging,''
  \emph{IEEE J. Sel. Areas Inf. Theory}, vol.~1, no.~1, pp. 39--56, May 2020.

\bibitem{Ronneberger.etal2015}
O.~Ronneberger, P.~Fischer, and T.~Brox, ``{U}-{N}et: {C}onvolutional networks
  for biomedical image segmentation,'' in \emph{Proc. Med. Image. Comput.
  Comput. Assist. Intervent.}, 2015, pp. 234--241.

\bibitem{DJin.etal2017}
K.~H. {Jin}, M.~T. {McCann}, E.~{Froustey}, and M.~{Unser}, ``Deep
  convolutional neural network for inverse problems in imaging,'' \emph{IEEE
  Trans. Image Process.}, vol.~26, no.~9, pp. 4509--4522, Sep. 2017.

\bibitem{Kang.etal2017}
E.~Kang, J.~Min, and J.~C. Ye, ``A deep convolutional neural network using
  directional wavelets for low-dose x-ray {CT} reconstruction,'' \emph{Medical
  Physics}, vol.~44, no.~10, pp. e360--e375, 2017.

\bibitem{Chen.etal2017}
H.~Chen, Y.~Zhang, M.~K. Kalra, F.~Lin, Y.~Chen, P.~Liao, J.~Zhou, and G.~Wang,
  ``Low-dose {CT} with a residual encoder-decoder convolutional neural
  network,'' \emph{IEEE Trans. Med. Imag.}, vol.~36, no.~12, pp. 2524--2535,
  Dec. 2017.

\bibitem{Sun.etal2018}
Y.~Sun, Z.~Xia, and U.~S. Kamilov, ``Efficient and accurate inversion of
  multiple scattering with deep learning,'' \emph{Opt. Express}, vol.~26,
  no.~11, pp. 14\,678--14\,688, May 2018.

\bibitem{han.etal2018}
Y.~{Han} and J.~C. {Ye}, ``Framing {U-Net} via deep convolutional framelets:
  Application to sparse-view {CT},'' \emph{IEEE Trans. Med. Imag.}, vol.~37,
  no.~6, pp. 1418--1429, 2018.

\bibitem{Venkatakrishnan.etal2013}
S.~V. Venkatakrishnan, C.~A. Bouman, and B.~Wohlberg, ``Plug-and-play priors
  for model based reconstruction,'' in \emph{Proc. IEEE Global Conf. Signal
  Process. and Inf. Process.}, Austin, TX, USA, Dec. 3-5, 2013, pp. 945--948.

\bibitem{Romano.etal2017}
Y.~Romano, M.~Elad, and P.~Milanfar, ``The little engine that could:
  {R}egularization by denoising ({RED}),'' \emph{SIAM J. Imaging Sci.},
  vol.~10, no.~4, pp. 1804--1844, 2017.

\bibitem{Zhang.etal2017}
K.~Zhang, W.~Zuo, Y.~Chen, D.~Meng, and L.~Zhang, ``Beyond a {G}aussian
  denoiser: {R}esidual learning of deep {CNN} for image denoising,'' \emph{IEEE
  Trans. Image Process.}, vol.~26, no.~7, pp. 3142--3155, Jul. 2017.

\bibitem{Chan.etal2016}
S.~H. Chan, X.~Wang, and O.~A. Elgendy, ``Plug-and-play {ADMM} for image
  restoration: Fixed-point convergence and applications,'' \emph{IEEE Trans.
  Comp. Imag.}, vol.~3, no.~1, pp. 84--98, Mar. 2017.

\bibitem{Sreehari.etal2016}
S.~Sreehari, S.~V. Venkatakrishnan, B.~Wohlberg, G.~T. Buzzard, L.~F. Drummy,
  J.~P. Simmons, and C.~A. Bouman, ``Plug-and-play priors for bright field
  electron tomography and sparse interpolation,'' \emph{IEEE Trans. Comput.
  Imaging}, vol.~2, no.~4, pp. 408--423, Dec. 2016.

\bibitem{Kamilov.etal2017}
U.~S. Kamilov, H.~Mansour, and B.~Wohlberg, ``A plug-and-play priors approach
  for solving nonlinear imaging inverse problems,'' \emph{IEEE Signal. Proc.
  Let.}, vol.~24, no.~12, pp. 1872--1876, Dec. 2017.

\bibitem{Buzzard.etal2017}
G.~T. Buzzard, S.~H. Chan, S.~Sreehari, and C.~A. Bouman, ``Plug-and-play
  unplugged: {O}ptimization free reconstruction using consensus equilibrium,''
  \emph{SIAM J. Imaging Sci.}, vol.~11, no.~3, pp. 2001--2020, Sep. 2018.

\bibitem{Sun.etal2019a}
Y.~Sun, B.~Wohlberg, and U.~S. Kamilov, ``An online plug-and-play algorithm for
  regularized image reconstruction,'' \emph{IEEE Trans. Comput. Imaging},
  vol.~5, no.~3, pp. 395--408, Sep. 2019.

\bibitem{Reehorst.Schniter2019}
E.~T. Reehorst and P.~Schniter, ``Regularization by denoising: Clarifications
  and new interpretations,'' \emph{IEEE Trans. Comput. Imag.}, vol.~5, no.~1,
  pp. 52--67, Mar. 2019.

\bibitem{Ryu.etal2019}
E.~K. Ryu, J.~Liu, S.~Wang, X.~Chen, Z.~Wang, and W.~Yin, ``Plug-and-play
  methods provably converge with properly trained denoisers,'' in \emph{Proc.
  36th Int. Conf. Mach. Learn.}, vol.~97, Long Beach, CA, USA, Jun. 09--15
  2019, pp. 5546--5557.

\bibitem{Mataev.etal2019}
G.~Mataev, P.~Milanfar, and M.~Elad, ``Deep{RED}: Deep image prior powered by
  {RED},'' in \emph{Proc. {IEEE} Int. Conf. Comput. Vis. Workshops}, Oct. 2019,
  pp. 1--10.

\bibitem{Wu.etal2020}
Z.~{Wu}, Y.~{Sun}, A.~{Matlock}, J.~{Liu}, L.~{Tian}, and U.~S. {Kamilov},
  ``{SIMBA}: Scalable inversion in optical tomography using deep denoising
  priors,'' \emph{IEEE J. Sel. Topics Signal Process.}, pp. 1--1, 2020.

\bibitem{Liu.etal2020}
J.~{Liu}, Y.~{Sun}, C.~{Eldeniz}, W.~{Gan}, H.~{An}, and U.~S. {Kamilov},
  ``{RARE}: Image reconstruction using deep priors learned without ground
  truth,'' \emph{IEEE J. Sel. Topics Signal Process.}, pp. 1--1, 2020.

\bibitem{zhang2018ista}
J.~{Zhang} and B.~{Ghanem}, ``{ISTA-Net}: {I}nterpretable optimization-inspired
  deep network for image compressive sensing,'' in \emph{Proc. {IEEE} Conf.
  Comput. Vision Pattern Recognit.}, 2018, pp. 1828--1837.

\bibitem{Yang.etal2016}
Y.~Yang, J.~Sun, H.~Li, and Z.~Xu, ``Deep {ADMM}-{N}et for compressive sensing
  {MRI},'' in \emph{Proc. Advances Neural Inf. Process. Syst.}, 2016, pp.
  10--18.

\bibitem{Hauptmann.etal2018}
A.~{Hauptmann}, F.~{Lucka}, M.~{Betcke}, N.~{Huynh}, J.~{Adler}, B.~{Cox},
  P.~{Beard}, S.~{Ourselin}, and S.~{Arridge}, ``Model-based learning for
  accelerated, limited-view 3-d photoacoustic tomography,'' \emph{IEEE Trans.
  Med. Imag.}, vol.~37, no.~6, pp. 1382--1393, 2018.

\bibitem{Adler.etal2018}
J.~{Adler} and O.~{{\"O}ktem}, ``Learned primal-dual reconstruction,''
  \emph{IEEE Trans. Med. Imag.}, vol.~37, no.~6, pp. 1322--1332, June 2018.

\bibitem{Aggarwal.etal2019}
H.~K. {Aggarwal}, M.~P. {Mani}, and M.~{Jacob}, ``Modl: Model-based deep
  learning architecture for inverse problems,'' \emph{IEEE Trans. Med. Imag.},
  vol.~38, no.~2, pp. 394--405, Feb. 2019.

\bibitem{Hosseini.etal2019}
S.~A. Hosseini, B.~Yaman, S.~Moeller, M.~Hong, and M.~Akcakaya, ``Dense
  recurrent neural networks for accelerated {MRI}: {H}istory-cognizant
  unrolling of optimization algorithms,'' \emph{IEEE J. Sel. Topics Signal
  Process.}, vol.~14, no.~6, pp. 1280--1291, Oct. 2020.

\bibitem{Chun.etal2020}
I.~Y. {Chun}, Z.~{Huang}, H.~{Lim}, and J.~{Fessler}, ``{Momentum-Net}: {F}ast
  and convergent iterative neural network for inverse problems,'' \emph{IEEE
  Trans. Patt. Anal. and Machine Intell.}, pp. 1--1, 2020.

\bibitem{Yaman.etal2020}
B.~Yaman, S.~A.~H. Hosseini, S.~Moeller, J.~Ellermann, K.~U{\u g}urbil, and
  M.~Ak{\c c}akaya, ``Self-supervised learning of physics-guided reconstruction
  neural networks without fully sampled reference data,'' \emph{Magn. Reson.
  Med.}, Jul. 2020.

\bibitem{Aggarwal.etal2020}
H.~K. {Aggarwal} and M.~{Jacob}, ``{J-MoDL}: {J}oint model-based deep learning
  for optimized sampling and reconstruction,'' \emph{IEEE J. Sel. Topics Signal
  Process.}, vol.~14, no.~6, pp. 1151--1162, 2020.

\bibitem{Kellman.etal2020}
M.~{Kellman}, K.~{Zhang}, E.~{Markley}, J.~{Tamir}, E.~{Bostan}, M.~{Lustig},
  and L.~{Waller}, ``Memory-efficient learning for large-scale computational
  imaging,'' \emph{IEEE Trans. Comp. Imag.}, vol.~6, pp. 1403--1414, 2020.

\bibitem{Bottou.Bousquet2007}
L.~Bottou and O.~Bousquet, ``The tradeoffs of large scale learning,'' in
  \emph{Proc. Advances Neural Inf. Process. Syst.}, Vancouver, BC, Canada, Dec.
  3-6, 2007, pp. 161--168.

\bibitem{Bertsekas2011}
D.~P. Bertsekas, ``Incremental proximal methods for large scale convex
  optimization,'' \emph{Math. Program. Ser. B}, vol. 129, pp. 163--195, 2011.

\bibitem{Kim.etal2013}
D.~{Kim}, D.~{Pal}, J.~{Thibault}, and J.~A. {Fessler}, ``Accelerating ordered
  subsets image reconstruction for {X}-ray {CT} using spatially nonuniform
  optimization transfer,'' \emph{IEEE Trans. Med. Imag.}, vol.~32, no.~11, pp.
  1965--1978, Nov. 2013.

\bibitem{Bottou.etal2018}
L.~Bottou, F.~E. Curtis, and J.~Nocedal, ``Optimization methods for large-scale
  machine learning,'' \emph{SIAM Rev.}, vol.~60, no.~2, pp. 223--311, 2018.

\bibitem{Ling.etal18}
R.~Ling, W.~Tahir, H.-Y. Lin, H.~Lee, and L.~Tian, ``High-throughput intensity
  diffraction tomography with a computational microscope,'' \emph{Biomed. Opt.
  Express}, vol.~9, no.~5, pp. 2130--2141, May 2018.

\bibitem{Kak.Slaney1988}
A.~C. Kak and M.~Slaney, \emph{Principles of Computerized Tomographic
  Imaging}.\hskip 1em plus 0.5em minus 0.4em\relax {IEEE}, 1988.

\bibitem{Parikh.Boyd2014}
N.~Parikh and S.~Boyd, ``Proximal algorithms,'' \emph{Foundations and Trends in
  Optimization}, vol.~1, no.~3, pp. 123--231, 2014.

\bibitem{Eckstein.Bertsekas1992}
J.~Eckstein and D.~P. Bertsekas, ``On the {D}ouglas-{R}achford splitting method
  and the proximal point algorithm for maximal monotone operators,''
  \emph{Mathematical Programming}, vol.~55, pp. 293--318, 1992.

\bibitem{Figueiredo.Nowak2003}
M.~A.~T. Figueiredo and R.~D. Nowak, ``An {EM} algorithm for wavelet-based
  image restoration,'' \emph{IEEE Trans. Image Process.}, vol.~12, no.~8, pp.
  906--916, Aug. 2003.

\bibitem{Bect.etal2004}
J.~Bect, L.~Blanc-Feraud, G.~Aubert, and A.~Chambolle, ``A $\ell_1$-unified
  variational framework for image restoration,'' in \emph{Proc. Euro. Conf.
  Comp. Vis.}, vol. 3024, New York, 2004, pp. 1--13.

\bibitem{Daubechies.etal2004}
I.~Daubechies, M.~Defrise, and C.~D. Mol, ``An iterative thresholding algorithm
  for linear inverse problems with a sparsity constraint,'' \emph{Commun. Pure
  Appl. Math.}, vol.~57, no.~11, pp. 1413--1457, Nov. 2004.

\bibitem{Beck.Teboulle2009b}
A.~Beck and M.~Teboulle, \emph{Convex Optimization in Signal Processing and
  Communications}.\hskip 1em plus 0.5em minus 0.4em\relax Cambridge, 2009, ch.
  Gradient-Based Algorithms with Applications to Signal Recovery Problems, pp.
  42--88.

\bibitem{Afonso.etal2010}
M.~V. Afonso, J.~M.Bioucas-Dias, and M.~A.~T. Figueiredo, ``Fast image recovery
  using variable splitting and constrained optimization,'' \emph{IEEE Trans.
  Image Process.}, vol.~19, no.~9, pp. 2345--2356, Sep. 2010.

\bibitem{Ng.etal2010}
M.~K. Ng, P.~Weiss, and X.~Yuan, ``Solving constrained total-variation image
  restoration and reconstruction problems via alternating direction methods,''
  \emph{SIAM J. Sci. Comput.}, vol.~32, no.~5, pp. 2710--2736, Aug. 2010.

\bibitem{Boyd.etal2011}
S.~Boyd, N.~Parikh, E.~Chu, B.~Peleato, and J.~Eckstein, ``Distributed
  optimization and statistical learning via the alternating direction method of
  multipliers,'' \emph{Foundations and Trends in Machine Learning}, vol.~3,
  no.~1, pp. 1--122, July 2011.

\bibitem{Zhao.etal2017}
H.~{Zhao}, O.~{Gallo}, I.~{Frosio}, and J.~{Kautz}, ``Loss functions for image
  restoration with neural networks,'' \emph{IEEE Trans. Comput. Imaging},
  vol.~3, no.~1, pp. 47--57, Mar. 2017.

\bibitem{Gregor.LeCun2010}
K.~Gregor and Y.~LeCun, ``Learning fast approximation of sparse coding,'' in
  \emph{Proc. 27th Int. Conf. Mach. Learn.}, Haifa, Israel, Jun. 21-24, 2010,
  pp. 399--406.

\bibitem{Schmidt.Roth2014}
U.~Schmidt and S.~Roth, ``Shrinkage fields for effective image restoration,''
  in \emph{Proc. {IEEE} Conf. Comput. Vis. Pattern Recognit.}, Columbus, OH,
  USA, Jun. 23-28, 2014, pp. 2774--2781.

\bibitem{Chen.etal2015}
Y.~Chen, W.~Yu, and T.~Pock, ``On learning optimized reaction diffuction
  processes for effective image restoration,'' in \emph{Proc. {IEEE} Conf.
  Comput. Vis. Pattern Recognit.}, Boston, MA, USA, Jun. 8-10, 2015, pp.
  5261--5269.

\bibitem{Kamilov.Mansour2016}
U.~S. Kamilov and H.~Mansour, ``Learning optimal nonlinearities for iterative
  thresholding algorithms,'' \emph{IEEE Signal Process. Lett.}, vol.~23, no.~5,
  pp. 747--751, May 2016.

\bibitem{Bostan.etal2018}
E.~Bostan, U.~S. Kamilov, and L.~Waller, ``Learning-based image reconstruction
  via parallel proximal algorithm,'' \emph{IEEE Signal Process. Lett.},
  vol.~25, no.~7, pp. 989--993, Jul. 2018.

\bibitem{Schlemper.etal2018}
J.~Schlemper, J.~Caballero, J.~V. Hajnal, A.~N. Price, and D.~Rueckert, ``A
  deep cascade of convolutional neural networks for dynamic {MR} image
  reconstruction,'' \emph{IEEE Trans. Med. Imag.}, vol.~37, no.~2, pp.
  491--503, Feb. 2018.

\bibitem{Biswas.etal2019}
S.~Biswas, H.~K. Aggarwal, and M.~Jacob, ``Dynamic {MRI} using model‐based
  deep learning and {SToRM} priors: {MoDL}‐{SToRM},'' \emph{Magn. Reson.
  Med.}, vol.~82, no.~1, pp. 485--494, Jul. 2019.

\bibitem{Dabov.etal2007}
K.~Dabov, A.~Foi, V.~Katkovnik, and K.~Egiazarian, ``Image denoising by sparse
  {3-D} transform-domain collaborative filtering,'' \emph{IEEE Trans. Image
  Process.}, vol.~16, no.~16, pp. 2080--2095, Aug. 2007.

\bibitem{Brifman.etal2016}
A.~Brifman, Y.~Romano, and M.~Elad, ``Turning a denoiser into a super-resolver
  using plug and play priors,'' in \emph{Proc. {IEEE} Int. Conf. Image Proc.},
  Phoenix, AZ, USA, Sep. 25-28, 2016, pp. 1404--1408.

\bibitem{Ahmad.etal2019}
R.~Ahmad, C.~A. Bouman, G.~T. Buzzard, S.~H. Chan, S.Liu, E.~Reehorst, and
  P.~Schniter, ``Plug-and-play methods for magnetic resonance imaging: Using
  denoisers for image recovery,'' \emph{IEEE Signal Process. Mag.}, vol.~37,
  no.~1, pp. 105--116, Jan. 2020.

\bibitem{Metzler.etal2018}
C.~Metzler, P.~Schniter, A.~Veeraraghavan, and R.~Baraniuk, ``pr{D}eep: Robust
  phase retrieval with a flexible deep network,'' in \emph{Proc. 36th Int.
  Conf. Mach. Learn.}, Stockholmsm{\"a}ssan, Stockholm Sweden, Jul. 10--15
  2018, pp. 3501--3510.

\bibitem{Sun.etal2019c}
Y.~Sun, J.~Liu, and U.~S. Kamilov, ``Block coordinate regularization by
  denoising,'' in \emph{Proc. Advances Neural Inf. Process. Syst.}, Vancouver,
  BC, Canada, Dec. 8-14, 2019, pp. 380--390.

\bibitem{zhang.etal2020a}
K.~Zhang, L.~V. Gool, and R.~Timofte, ``Deep unfolding network for image
  super-resolution,'' in \emph{Proc. {IEEE} Conf. Comput. Vis. Pattern
  Recognit.}, Jun. 2020, pp. 3217--3226.

\bibitem{Wu.etal2018}
Y.~Wu and K.~He, ``Group normalization,'' in \emph{Proc. Euro. Conf. Comp.
  Vis.}, Sep. 2018, pp. 3--19.

\bibitem{Robbins.Monro1951}
H.~Robbins and S.~Monro, ``A stochastic approximation method,'' \emph{The
  Annals of Mathematical Statistics}, vol.~22, no.~3, pp. 400--407, September
  1951.

\bibitem{Nesterov2004}
Y.~Nesterov, \emph{Introductory Lectures on Convex Optimization: A Basic
  Course}.\hskip 1em plus 0.5em minus 0.4em\relax Kluwer Academic Publishers,
  2004.

\bibitem{Ghadimi.Lan2016}
S.~Ghadimi and G.~Lan, ``Accelerated gradient methods for nonconvex nonlinear
  and stochastic programming,'' \emph{Math. Program. Ser. A}, vol. 156, no.~1,
  pp. 59--99, Mar. 2016.

\bibitem{Wu.etal2019}
Z.~Wu, Y.~Sun, J.~Liu, and U.~S. Kamilov, ``Online regularization by denoising
  with applications to phase retrieval,'' in \emph{Proc. {IEEE} Int. Conf.
  Comput. Vis. Workshops}, Oct. 2019, pp. 1--9.

\bibitem{Beck.Teboulle2009a}
A.~Beck and M.~Teboulle, ``Fast gradient-based algorithm for constrained total
  variation image denoising and deblurring problems,'' \emph{IEEE Trans. Image
  Process.}, vol.~18, no.~11, pp. 2419--2434, November 2009.

\bibitem{Wang.etal2004}
Z.~Wang, A.~C. Bovik, H.~R. Sheikh, and E.~P. Simoncelli, ``Image quality
  assessment: from error visibility to structural similarity,'' \emph{IEEE
  Trans. Image Process.}, vol.~13, no.~4, pp. 600--612, Apr 2004.

\bibitem{knoll2020fastmri}
{F. {Knoll} \emph{et al.}}, ``{fastMRI}: A publicly available raw k-space and
  {DICOM} dataset of knee images for accelerated {MR} image reconstruction
  using machine learning,'' \emph{Radiology: Artificial Intelligence}, vol.~2,
  no.~1, p. e190007, 2020.

\bibitem{Smith2017}
L.~N. {Smith}, ``Cyclical learning rates for training neural networks,'' in
  \emph{2017 IEEE Winter Conference on Applications of Computer Vision}, Mar.
  2017, pp. 464--472.

\bibitem{mccollough2016tu}
C.~McCollough, ``{TU-FG-207A-04}: Overview of the low dose {CT} grand
  challenge,'' \emph{Med. Phys}, vol.~43, no. 6Part35, pp. 3759--3760, 2016.

\bibitem{adler.etal2017}
J.~Adler and O.~{\"O}ktem, ``Solving ill-posed inverse problems using iterative
  deep neural networks,'' \emph{Inverse Problems}, vol.~33, no.~12, p. 124007,
  2017.

\bibitem{vanAarle.etal16}
{W. van Aarle \emph{et al.}}, ``Fast and flexible x-ray tomography using the
  {ASTRA} toolbox,'' \emph{Opt. Express}, vol.~24, no.~22, pp.
  25\,129--25\,147, Oct. 2016.

\bibitem{Gupta2018}
H.~{Gupta}, K.~H. {Jin}, H.~Q. {Nguyen}, M.~T. {McCann}, and M.~{Unser},
  ``C{NN}-based projected gradient descent for consistent ct image
  reconstruction,'' \emph{IEEE Trans. Med. Imag.}, vol.~37, no.~6, pp.
  1440--1453, Jun. 2018.

\bibitem{Kingma.Ba2015}
D.~Kingma and J.~Ba, ``Adam: {A} method for stochastic optimization,'' in
  \emph{Proc. Int. Conf. on Learn. Represent.}, 2015.

\bibitem{Zhang2020Why}
J.~{Zhang}, T.~{He}, S.~{Sra}, and A.~{Jadbabaie}, ``Why gradient clipping
  accelerates training: A theoretical justification for adaptivity,'' in
  \emph{Proc. Int. Conf. on Learn. Represent.}, 2020.

\bibitem{Kamilov.etal2016}
U.~S. Kamilov, I.~N. Papadopoulos, M.~H. Shoreh, A.~Goy, C.~Vonesch, M.~Unser,
  and D.~Psaltis, ``Optical tomographic image reconstruction based on beam
  propagation and sparse regularization,'' \emph{IEEE Trans. Comp. Imag.},
  vol.~2, no.~1, pp. 59--70, Mar. 2016.

\end{thebibliography}

% Generated by IEEEtran.bst, version: 1.14 (2015/08/26)

\end{document}